\tikzset{%
  highlight/.style={rectangle,rounded corners,fill=red!15,draw,fill opacity=0.3,thick,inner sep=0pt}
}
\tikzset{%
  highlight1/.style={rectangle,rounded corners,fill=blue!15,draw,fill opacity=0.3,thick,inner sep=0pt}
}
\theoremstyle{plain}
\newtheorem{thm}{Theorem}[section]
\newtheorem{cor}[thm]{Corollary}
\newtheorem{lem}[thm]{Lemma}
\newtheorem{prop}[thm]{Proposition}
\theoremstyle{definition}
\newtheorem{defn}[thm]{Definition}
\newtheorem{rem}[thm]{Remark}
\newtheorem{ex}[thm]{Example}
\numberwithin{equation}{section}
\newcommand{\Fq}{\mathbb{F}_{q}}
\newcommand{\C}{\mathcal{C}}
\newcommand{\E}{\mathbb{E}}
\newcommand{\F}{{\mathbb F}}
\newcommand{\G}{{\mathbb G}}
\newcommand{\HH}{{\mathbb H}}
\newcommand{\Tr}{{\rm Tr}}
\newcommand{\Span}{{\rm Span}}
\begin{document}

\title{Structure and Performance of Generalized Quasi-Cyclic Codes}
\maketitle

\author
{ {\large \begin{center} Cem G\"{u}neri$^{1}$, Ferruh
\"{O}zbudak$^{3}$, Buket \"{O}zkaya$^{1}$,  Elif Sa\c{c}\i kara
$^{1}$,\end{center} \begin{center} Zahra Sepasdar$^{4}$, Patrick
Sol\'e$^{2}$ \end{center}}
%\thanks{This work was supported in part by ??.}}\\
%\author{Anonymous submission to DSD}
\vspace{0.3cm} \small
\begin{center}
$^1$ Sabanc{\i} University, Istanbul, Turkey \\
$^2$ CNRS/LAGA, University of Paris 8, 93 526 Saint-Denis, France,\\
$^3$ Middle East Technical University, Ankara, Turkey\\
$^4$ Department of Pure Mathematics, Ferdowsi University of Mashhad, Iran\\
\end{center}
}

%\vspace{1cm}

\abstract Generalized quasi-cyclic (GQC) codes form a natural
generalization of quasi-cyclic (QC) codes. They are viewed here as
mixed alphabet codes over a family of ring alphabets. Decomposing
these rings into local rings by the Chinese Remainder Theorem yields
a decomposition of GQC codes into a sum of concatenated codes. This
decomposition leads to a trace formula, a minimum distance bound,
and to a criteria for the GQC code to be self-dual or to be linear
complementary dual (LCD). Explicit long GQC codes that are LCD, but
not QC, are exhibited.
\endabstract

\vspace{0.3cm}

\noindent \emph{Keywords:\/} GQC codes, QC codes, LCD codes,
self-dual codes.

\vspace{0.4cm}

% -----------------------------------------------------------
\maketitle
% -----------------------------------------------------------

\section{Introduction}

Quasi-cyclic codes (QC) have been known for more than fifty years.
They have been shown to be asymptotically good, which is in marked
contrast with the subclass of cyclic codes. Even in short length
(less than a hundred) they contain more optimal codes than cyclic
codes. Still, their structure is more complex than that of cyclic
codes. Let $q$ denote a prime power and $\F_q$ be the finite field
of that order. A linear code over $\F_q$ is said to be a
quasi-cyclic code of index $\ell$ and length $n=\ell m,$ if and only
if it is held invariant by $T^{\ell}$, where $T$ is the standard
coordinate shift on $n$ digits and $\ell$ is the smallest number
with this property. The approach of \cite{LS} is to view such a code
as mapped from a code of length $\ell$ over the ring
$$R=\F_q[x]/\langle x^m-1\rangle.$$
In recent years a super class of quasi cyclic codes has appeared:
generalized quasi-cyclic codes (\cite{EY,SK}). Up to coordinate
permutation a QC code is equivalent to a linear code with block
circulant generator matrix. More specifically, the circulant blocks
will have the same size, namely the co-index $m.$ The idea of
Generalized Quasi-Cyclic (GQC) codes is to relax this requirement to
allow blocks of different sizes. The immediate benefit is to
construct codes whose lengths are not multiple of the index. For
instance a GQC code might very well have prime length. The CRT
decomposition has been extended to GQC codes at the price of a more
complicated notation (\cite{EY}).

The aim of this paper is twofold. First, we aim to extend the
structural theory of \cite{LS} to GQC codes, a program partially
done in \cite{EY}. In particular the trace formula of \cite{LS} is
extended to GQC codes. Concatenated description of GQC codes is
presented and the results for QC codes in \cite{GO} is extended to
GQC codes. Moreover, multilevel (generalized) concatenated
description of GQC codes is obtained, which yields a minimum
distance bound for GQC codes, extending Jensen's bound for QC codes.
Let us note that there is a minimum distance bound on GQC codes due
to Esmaeili and Yari (\cite{EY}) but it only applies to one
generator GQC codes. Our bound applies to all GQC codes. Criteria
for self-duality bearing on the component codes are given. In a
recent paper \cite{GOS}, a similar criterion for a QC code to
intersect its dual trivially (LCD code as Linear Complementary Dual)
was derived. This criterion is generalized here to GQC codes.

Next, we study the asymptotic performance of GQC codes. Explicit
long GQC codes that are LCD, but not QC, are exhibited. These codes
have only finitely many distinct co-indices in the spirit of
\cite{LS3}, but have a length going to infinity. The proof rests on
the existence of families of good QC codes that are LCD \cite{GOS}.

The material is organized as follows. The next section collects the
necessary definitions and notations. Section \ref{trace section}
develops the concatenated structure and a trace expression. Section
\ref{multilevel} derives the minimum distance bound. Section
\ref{sd-lcd section} derives criteria for self-duality and LCDness.
Asymptotic results are given in Section \ref{asymptotics section}.
Section \ref{conclusion} concludes the article and points out
directions for future research.

\section{Background on QC and GQC codes}
\label{intro} Let $\Fq$ denote the finite field with $q$ elements,
where $q$ is a prime power. A linear code $C$ of length $m\ell$ over
$\Fq$ is called a quasi-cyclic (QC) code of index $\ell$ if it is
invariant under shift of codewords by $\ell$ positions and $\ell$ is
the minimal number with this property. Note that if $\ell=1$, then
$C$ is a cyclic code. If we view codewords of $C$ as $m \times \ell$
arrays as follows
\begin{equation}\label{array}c=\left(
  \begin{array}{ccc}
    c_{00} & \ldots & c_{0,\ell-1} \\
    \vdots &  & \vdots \\
    c_{m-1,0} & \ldots & c_{m-1,\ell-1} \\
  \end{array}
\right),\end{equation} then being invariant under shift by $\ell$
units amounts to being closed under row shift.

Let us define the quotient ring $R:=\Fq[x]/\langle x^m-1 \rangle$.
To an element $c\in \Fq^{m\times \ell} \simeq \Fq^{m\ell}$ as in
(\ref{array}), we associate an element of $R^\ell$
\begin{equation} \label{associate-1}
\vec{c}(x):=(c_0(x),c_1(x),\ldots ,c_{\ell-1}(x)) \in R^\ell ,
\end{equation}
where for each $0\leq j \leq \ell-1$, \begin{equation}
\label{columns} c_j(x):= c_{0,j}+c_{1,j}x+c_{2,j}x^2+\cdots +
c_{m-1,j}x^{m-1} \in R .\end{equation} Then, the following map is an
$\F_q$-linear isomorphism.
\begin{equation}\begin{array}{lll} \label{identification-1}
\phi: \hspace{2cm} \F_q^{m\ell} & \longrightarrow & R^\ell  \\
c=\left(
  \begin{array}{ccc}
    c_{00} & \ldots & c_{0,\ell-1} \\
    \vdots &  & \vdots \\
    c_{m-1,0} & \ldots & c_{m-1,\ell-1} \\
  \end{array}
\right) & \longmapsto & \vec{c}(x) .
\end{array}\end{equation}
Note that for $\ell=1$, this amounts to the classical polynomial
representation of cyclic codes. Observe that $\ell$ shift on
$\F_q^{m\ell}$ corresponds to componentwise multiplication by $x$ in
$R^\ell$ and a $q$-ary QC code $C$ of length $m\ell$ and index
$\ell$ can be considered as an $R$-submodule in $R^\ell$.

Let us now recall the decomposition of a length $m\ell$ QC code over
$\F_q$ into shorter codes over extensions of $\F_q$. We follow the
brief presentation in \cite{GO} and refer the reader to \cite{LS}
for details. We assume that $\gcd(m,q)=1$ and factor the polynomial
$x^m-1$ into pairwise distinct irreducible polynomials in $\F_q[x]$
as
\begin{equation}\label{irreducibles}
x^m-1=f_1(x)f_2(x)\cdots f_s(x).
\end{equation}
By Chinese Remainder Theorem, we have the following ring
isomorphism:
\begin{equation} \label{CRT-1}
R\cong \bigoplus_{i=1}^{s} \F_q[x]/\langle f_i(x)\rangle .
\end{equation}
Since each $f_i(x)$ divides $x^m-1$, their roots are powers of some
fixed primitive $m^{th}$ root of unity $\xi$. For each
$i=1,\ldots,s$, let $u_i$ be the smallest nonnegative integer such
that $f_i(\xi^{u_i})=0$. Since $f_i(x)$'s are irreducible, direct
summands in (\ref{CRT-1}) are field extensions of $\F_q$. If
$\E_i:=\F_q[x]/\langle f_i(x) \rangle$ for $1\leq i \leq s$, then we
have
\begin{equation} \label{CRT-2}
R^{\ell}\cong \E_1^{\ell} \oplus \cdots  \oplus \E_{s}^{\ell}.
\end{equation}
Hence, a QC code $\C\subset R^\ell$ can be viewed as an $(\E_1
\oplus \cdots \oplus \E_{s})$-submodule of $\E_1^{\ell} \oplus
\cdots  \oplus \E_{s}^{\ell}$ and decomposes as
\begin{equation} \label{constituents}
C=C_1\oplus \cdots  \oplus C_{s},
\end{equation}
where $C_i$ is a linear code of length $\ell$ over $\E_i$, for each
$i$. These length $\ell$ linear codes over various extensions of
$\F_q$ are called the constituents of $C$.

If $C\subset R^\ell$ is generated as an $R$-module by
$$\{\bigl(a_{0}^1(x),\ldots ,a_{\ell-1}^1(x)\bigr),\ldots ,
\bigl(a_{0}^r(x),\ldots ,a_{\ell-1}^r(x)\bigr)\} \subset R^\ell ,$$
then
\begin{equation}\label{explicit constituents}
C_i  =  \Span_{\E_i}\bigl\{\bigl(a_{0}^b(\xi^{u_i}),\ldots
,a_{\ell-1}^b(\xi^{u_i})\bigr): 1\leq b \leq r \bigr\}, \ \mbox{for
$1\leq i \leq s$}.
\end{equation}

Another way of decomposing QC codes is given by Jensen (\cite{J}) by
the concatenation method. For each $1\leq i \leq s$, consider the
minimal cyclic code of length $m$ over $\F_q$, whose check
polynomial is $f_i(x)$. Let $\theta_i$ denote the generating
primitive idempotent for each minimal cyclic code in consideration.
Jensen showed the following.

\begin{thm} \cite{J} \label{Jensen's thm}
(i) Let $C$ be a length $m\ell$ and index $\ell$ QC code over $\Fq$.
Then there exist linear codes $\mathfrak{C}_i$ of length $\ell$ over
$\E_i$ such that $C=\displaystyle\bigoplus_{i=1}^s \langle \theta_i
\rangle \Box \mathfrak{C}_i$.

(ii) Conversely, let $\mathfrak{C}_i$ be an $\E_i$-linear code of
length $\ell$ for each $i\in \{1,\ldots ,s\}$. Then,
$C=\displaystyle\bigoplus_{i=1}^s  \langle \theta_i \rangle \Box
\mathfrak{C}_i$ is a $q$-ary QC code of length $m\ell$ and index
$\ell$.
\end{thm}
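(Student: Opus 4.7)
The plan is to identify the CRT decomposition \eqref{constituents} with Jensen's concatenated description. The bridge is the canonical $\F_q$-algebra isomorphism $\langle \theta_i \rangle \cong \E_i$, induced by the projection $R \twoheadrightarrow R/\langle f_i(x)\rangle = \E_i$: since $\theta_i$ is the primitive idempotent associated with $f_i(x)$, the ideal $\langle \theta_i \rangle \subseteq R$ has check polynomial $f_i(x)$, and the projection restricts to an isomorphism on $\langle \theta_i \rangle$ sending $\theta_i$ to $1$. Under this identification the inner encoding map $\E_i \xrightarrow{\sim} \langle \theta_i \rangle$ is precisely the $\F_q$-linear map realizing the concatenation $\langle \theta_i \rangle \Box \mathfrak{C}_i$, so an outer codeword $(e_0, \ldots, e_{\ell-1}) \in \mathfrak{C}_i \subseteq \E_i^\ell$ corresponds to the element of $R^\ell$ whose $i$-th CRT component is $(e_0, \ldots, e_{\ell-1})$ and whose other CRT components vanish.

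For part (i), apply \eqref{constituents} to decompose the given QC code as $C = C_1 \oplus \cdots \oplus C_s$ with $C_i \subseteq \E_i^\ell$, and set $\mathfrak{C}_i := C_i$. The correspondence above then shows $\langle \theta_i \rangle \Box \mathfrak{C}_i$ is the $R$-submodule of $R^\ell$ supported on the $i$-th CRT factor and equal there to $C_i$. Because the primitive idempotents $\theta_1, \ldots, \theta_s$ are pairwise orthogonal and sum to $1$, the sum $\bigoplus_i \langle \theta_i \rangle \Box \mathfrak{C}_i$ is indeed direct and reassembles into $C$.

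For part (ii), each summand $\langle \theta_i \rangle \Box \mathfrak{C}_i$ is an $R$-submodule of $R^\ell$ (the $R$-action coming from the cyclic structure of the inner code $\langle \theta_i \rangle$), hence closed under multiplication by $x$, which corresponds to the $\ell$-shift $T^\ell$ on $\F_q^{m\ell}$. Consequently $C = \bigoplus_i \langle \theta_i \rangle \Box \mathfrak{C}_i$ is $T^\ell$-invariant, so it is QC of index dividing $\ell$. The main obstacle, and the only subtle point, is to verify that the index is \emph{exactly} $\ell$ rather than a proper divisor; this can fail for degenerate choices of the $\mathfrak{C}_i$ (e.g.\ if all but one are trivial and the nontrivial one has extra shift symmetries), and is typically either excluded by a genericity hypothesis or absorbed by reading the statement as ``QC of index dividing $\ell$,'' the convention implicit throughout \cite{LS} and \cite{GO}.
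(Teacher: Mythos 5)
Your approach mirrors the paper's proof of the GQC generalization (Theorem~\ref{generalized Jensen}); the paper itself cites this statement to Jensen without reproving it, and the specialization $m_0=\cdots=m_{\ell-1}=m$ of that proof is exactly your argument. One place to tighten in part (ii): closure of $\langle\theta_i\rangle\Box\mathfrak{C}_i$ under multiplication by $x$ is not automatic from ``the cyclic structure of the inner code''---what makes it work is that $\psi_i$ is a field isomorphism carrying multiplication by $\xi^{u_i}$ in $\E_i$ to multiplication by $x$ in $\langle\theta_i\rangle$, so that $x\cdot\bigl(\psi_i(c_0),\ldots,\psi_i(c_{\ell-1})\bigr)=\bigl(\psi_i(\xi^{u_i}c_0),\ldots,\psi_i(\xi^{u_i}c_{\ell-1})\bigr)$, and it is precisely here that the hypothesis that $\mathfrak{C}_i$ is $\E_i$-linear (not merely $\F_q$-linear) is used; the paper's proof of Theorem~\ref{generalized Jensen}(ii) spells this step out. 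Your caveat about the index being exactly $\ell$ rather than a proper divisor is a fair observation about the convention implicit in the cited sources.
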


Note that each field $\E_i$ is isomorphic to $\langle \theta_i
\rangle$, for each $1\leq i \leq s$, via the maps
\begin{eqnarray} \label{isoms}
\begin{array}{ccc} \varphi_i:\langle \theta_i \rangle
& \longrightarrow & \E_i \\ \hspace{0.5cm} a(x)& \longmapsto &
a(\xi^{u_i}) \end{array}
& & \begin{array}{ccc} \psi_i: \E_i & \longrightarrow & \langle \theta_i \rangle \\
\hspace{0.5cm} \delta & \longmapsto & \sum\limits_{k=0}^{m-1} a_kx^k
\end{array}\ \ ,
\end{eqnarray}
where
$$a_k=\frac{1}{m} \Tr_{\E_i/\F_q}(\delta\xi^{-ku_i}).$$
It is easy to observe that $\varphi_i$ and $\psi_i$ are inverse to
each other. Let us note that for each $i$, the concatenation of the
minimal cyclic code $\langle \theta_i \rangle$ and the linear code
$\mathfrak{C}_i$ over $\E_i$ is carried out by the map $\psi_i$,
which identifies the field $\E_i$ with the minimal cyclic code. In
other words, a codeword $(c_0,\ldots,c_{\ell-1})$ in some
$\mathfrak{C}_i$ is mapped to
$\left(\psi_i(c_0),\ldots,\psi_i(c_{\ell-1})\right)$ in $R^{\ell}$.

It is proved in \cite{GO} that for a given QC code $C$, the
constituents $C_i$'s in (\ref{constituents}) and the outer codes
$\mathfrak{C}_i$'s in the concatenated structure are equal to each
other (see \cite[Theorem 4.1]{GO} ).

By (\ref{isoms}), the concatenated structure of QC codes can be used
to demonstrate the trace representation of QC codes given by
Ling-Sol\'{e}, which provides a vectorial representation of
codewords equivalent to (\ref{array}), when the constituents are
known.

\begin{thm} \cite[Theorem 5.1]{LS} \cite[Theorem 4.2]{GO} \label{traces}
Consider the QC code $C$ with the constituents $C=C_{1}\oplus \cdots
\oplus C_{s},$ where $C_i \subset
\E_{i}^\ell=\F_q(\xi^{u_{i}})^\ell$ is linear over $\E_{i}$ of
length $\ell$ for each $1\leq i \leq s$. Then an arbitrary codeword
$c\in C$ as an $m\times \ell$ array has the form
$$c=\left(\begin{array}{c}c_0(\lambda_1,\ldots
,\lambda_{s}) \\ c_1(\lambda_1,\ldots ,\lambda_{s}) \\ \vdots \\
c_{m-1}(\lambda_1,\ldots ,\lambda_{s})
\end{array} \right),$$
where $\lambda_i=(\lambda_{i,0},\ldots ,\lambda_{i,\ell-1})$ is a
codeword in $C_i$ for each $i$ and $$c_k(\lambda_1,\ldots
,\lambda_{s})=\left(\sum\limits_{i=1}^{s}\Tr_{\E_{i}/\F_q}\left(\lambda_{i,j}\xi^{-ku_{i}}
\right) \right)_{0\leq j \leq \ell-1} ,$$ for each $0\leq k \leq
m-1$.
\end{thm}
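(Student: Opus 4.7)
The plan is to combine the Jensen concatenated decomposition (the preceding Theorem) with the explicit inverse isomorphism $\psi_i:\E_i \to \langle \theta_i\rangle$ given in (\ref{isoms}), and then read off the coefficients of the resulting polynomials column by column to recover the $m\times\ell$ array shape of (\ref{array}).

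First, by Jensen's theorem every codeword of $C$ can be written as a sum of codewords coming from the concatenations $\langle \theta_i\rangle \Box \mathfrak{C}_i$, where $\mathfrak{C}_i$ is the outer code attached to the $i$-th minimal cyclic component. Invoking \cite[Theorem 4.1]{GO}, cited immediately before the statement, I identify $\mathfrak{C}_i = C_i$ for each $i$. Thus an arbitrary $c \in C$ corresponds under the map $\phi$ of (\ref{identification-1}) to
$$\vec{c}(x) \;=\; \sum_{i=1}^{s} \bigl(\psi_i(\lambda_{i,0}),\ldots,\psi_i(\lambda_{i,\ell-1})\bigr) \;\in\; R^\ell,$$
where $\lambda_i=(\lambda_{i,0},\ldots,\lambda_{i,\ell-1})$ ranges over $C_i$ and the $j$-th slot records the concatenation of the scalar $\lambda_{i,j} \in \E_i$ with the minimal cyclic code $\langle \theta_i\rangle$ via $\psi_i$.

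Next, I would expand each $\psi_i(\lambda_{i,j})$ using its explicit form in (\ref{isoms}),
$$\psi_i(\lambda_{i,j}) \;=\; \sum_{k=0}^{m-1} \frac{1}{m}\,\Tr_{\E_i/\F_q}\bigl(\lambda_{i,j}\,\xi^{-ku_i}\bigr)\, x^k,$$
and then sum over $i$. Matching this with the coefficient extraction in (\ref{columns}), the $(k,j)$-entry of the array of $c$ comes out to $\sum_{i=1}^{s}\frac{1}{m}\Tr_{\E_i/\F_q}(\lambda_{i,j}\,\xi^{-ku_i})$. Since each $C_i$ is $\E_i$-linear and $\gcd(m,q)=1$, the scalar $1/m$ can be absorbed into the $\lambda_{i,j}$'s without leaving $C_i$, reproducing precisely the claimed trace expression for $c_k(\lambda_1,\ldots,\lambda_s)$.

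The main conceptual ingredient is the identification of the constituents $C_i$ of the CRT decomposition (\ref{constituents}) with the outer codes $\mathfrak{C}_i$ of Jensen's concatenated structure, which is supplied by \cite[Theorem 4.1]{GO}; once that bridge is available, the argument is essentially a bookkeeping of coordinates combined with the explicit trace form of $\psi_i$. I do not anticipate a hard obstacle; the only delicate point is the normalization factor $1/m$, and this is harmless because $m$ is invertible in $\F_q$ under the standing assumption and the constituents are closed under field scaling.
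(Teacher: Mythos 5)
Your proof is correct and follows exactly the route the paper alludes to: Jensen's concatenated decomposition (Theorem \ref{Jensen's thm}) together with the identification of outer codes with constituents from \cite[Theorem 4.1]{GO} and the explicit formula for $\psi_i$ in (\ref{isoms}). The paper merely cites Theorem \ref{traces} and remarks that it can be "demonstrated" this way; you have filled in the bookkeeping carefully and correctly.

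One small point worth making explicit. Your derivation yields the $(k,j)$-entry $\tfrac{1}{m}\sum_i \Tr_{\E_i/\F_q}(\lambda_{i,j}\xi^{-ku_i})$, and you absorb the factor $\tfrac{1}{m}$ by the reparametrization $\lambda_i \mapsto m\lambda_i$, which is a bijection of $C_i$ since $m$ is invertible in $\F_q$ and $C_i$ is $\E_i$-linear. This is valid precisely because $m$ is the \emph{same} scalar in every column; in the GQC analogue (Theorem \ref{traces-2}) the normalizations $\tfrac{1}{m_j}$ vary with $j$, so the corresponding reparametrization would have to act column by column, would not be scalar multiplication on $C_i$, and hence cannot be absorbed. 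This is exactly why the paper retains the $\tfrac{1}{m_j}$ factor in Theorem \ref{traces-2} while the statement you are proving omits it. Your argument handles the QC case properly; no gap.
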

\noindent If we set $\ell=1$ above, then we get the trace
representation of a $q$-ary cyclic code of length $m$.

Generalized quasi-cyclic (GQC) codes were introduced in \cite{SK},
where their description is given as follows.

\begin{defn}
Let $m_0,\ldots,m_{\ell-1}$ be positive integers and set
$R_j:=\Fq[x]/\langle x^{m_j}-1\rangle$ for each $j=0,\ldots,\ell-1$.
An $\Fq[x]$-submodule of $R':=R_0 \times \cdots \times R_{\ell-1}$
is called a generalized quasi-cyclic (GQC) code of block lengths
$(m_0,\ldots,m_{\ell-1})$, which is a linear code of length $m_0
+\cdots +m_{\ell-1}$ over $\Fq$.
\end{defn}
Note that if $m_0=\cdots = m_{\ell-1}=m$, then we obtain a
quasi-cyclic code of length $m\ell$ and index $\ell$.

\begin{ex}
For any finite field $\F_q$ and any positive integer $n$, the
$q$-ary repetition code of length $n$ is a GQCCD code for any
partition of $n$. Its dual, namely the parity check code of length
$n$ is also GQCCD. A nontrivial example is the class of binary
Cordaro-Wagner codes(\cite{CW}), which are, by definition, two
dimensional codes attaining the best possible distance. Given an
$[n,2,d]$ Cordaro-Wagner code with the column partition $(h,j,k)$,
each $h, j$ and $k$ gives the number of the nonzero binary columns,
namely $\small\left[\begin{array}{c} 1 \\ 0\end{array}\right]$, $\small\left[\begin{array}{c} 0 \\
1\end{array}\right]$ and $\small\left[\begin{array}{c} 1 \\
1\end{array}\right]$ \normalsize in the generator matrix. Note that
this code is self-orthogonal if $n$ is multiple of 6 and
complementary-dual otherwise (see Table I in \cite{CW}).

Consider the binary $[16, 2, 10]$ Cordaro-Wagner code $C$ generated
by
$$G=\left(
      \begin{array}{cccccccccccccccc}
        1 & 1 & 1 & 1 & 1 & 1 & 0 & 0 & 0 & 0 & 0 & 1 & 1 & 1 & 1 & 1 \\
        0 & 0 & 0 & 0 & 0 & 0 & 1 & 1 & 1 & 1 & 1 & 1 & 1 & 1 & 1 & 1 \\
      \end{array}
    \right),
$$
where $h=6$, $j=k=5$.

If we set $m_1=h$, $m_2=j$, $m_3=k$ and $\ell=3$, then $C$ is a
binary GQC code with:
$$C=\langle(x^5+x^4+x^3+x^2+x+1, 0, x^4+x^3+x^2+x+1), (0, x^4+x^3+x^2+x+1, x^4+x^3+x^2+x+1)\rangle.$$
It is easy to observe that any Cordaro-Wagner code will give a
binary 2-generator GQC code, where the polynomial coordinates of the
generators will be either $0$ or $\frac{x^m-1}{x-1}$ for $m=h,j,k$.
\end{ex}

The factorization of GQC codes into constituents is given by
Esmaeili and Yari in \cite{EY}. We will review this decomposition
and introduce a notation which is suitable for presentation of our
results in the rest of the article.

Let $\gcd(m_j,q)=1$ for each $j=0,\ldots,\ell-1$, then each
$x^{m_j}-1$ factors into distinct irreducible polynomials. Suppose
that the total number of distinct irreducible factors over all
$x^{m_j}-1$ decompositions is $s$ and let $f_{1}(x),\ldots
,f_{s}(x)$ denote these irreducible polynomials. Then for each $j$
we have
\begin{equation}\label{irreducibles-2}
x^{m_j}-1=f_{1}(x)^{v_{1,j}}f_{2}(x)^{v_{2,j}}\cdots
f_{s}(x)^{v_{s,j}},
\end{equation}
where $v_{i,j} \in \{0,1\}$. Since $f_{i}(x)$'s are irreducible,
$\F_q[x]/\langle f_{i}(x)\rangle$ is a finite field extension of
$\F_q$. Set $\E_i:=\F_q[x]/\langle f_{i}(x)\rangle$ for $1\leq i\leq
s$ and for $1\leq i\leq s$, $0\leq j \leq \ell -1$, define
\begin{equation} \label{extensions}
\E_{i,j}:= \begin{cases}
   \ \E_i,   \text{ if } v_{i,j}=1, \\
   \{0\},   \text{ if } v_{i,j}=0.
  \end{cases}.
\end{equation}

Let us fix a root $\alpha_i$ of each $f_i$ ($1\leq i \leq s$). For
$a(x)\in R_j$ and $1\leq i \leq s$, set
\begin{equation}\label{evaluation}
a_{i,j}= \begin{cases}
   a(\alpha_i),   \text{ if } \E_{i,j}=\E_i, \\
   0,   \text{ \hspace{0.53cm} if } \E_{i,j}=\{0\}.
  \end{cases}
\end{equation}
By (\ref{irreducibles-2}), (\ref{extensions}) and the Chinese
Remainder Theorem, we get the following ring isomorphism for each
$j=0,\ldots,\ell-1$:
\begin{equation} \label{CRT-4}
R_j\cong \bigoplus_{i=1}^{s} \E_{i,j} ,
\end{equation}
where the isomorphism maps $a(x)\in R_j$ to $(a_{1,j}+\cdots
+a_{s,j})$ (cf. (\ref{evaluation})). Therefore we have
\begin{equation} \label{CRT-6}
R'= R_0 \times \cdots \times R_{\ell-1}\cong
\left(\bigoplus_{i=1}^{s} \E_{i,0} \right)\times \cdots \times
\left(\bigoplus_{i=1}^{s} \E_{i,\ell-1} \right) \cong
\bigoplus_{i=1}^{s} (\E_{i,0} \times \cdots \times \E_{i,\ell-1}),
\end{equation}
where $\left(a^{0}(x),\ldots ,a^{\ell-1}(x)\right)\in R'$ is mapped
to $\displaystyle{\sum_{i=1}^{s}
\left(a^{0}_{i,0},a^{1}_{i,1},\ldots , a^{\ell -1}_{i,\ell -1}
\right)}$. In particular, a GQC code $C\subset R'$ can be viewed
inside $\displaystyle{\bigoplus_{i=1}^s \E_i^{\ell}}$ since for each
$j$, $\E_{i,j}$ is either $\E_i$ or $\{0\}\subset\E_i$.

\begin{prop}\label{GQCconstituents} Suppose the GQC code $C\subset R'$ is generated as an $\F_q[x]$-module by
\begin{equation*} \left\{\bigl(a^{1,0}(x),\ldots ,a^{1,\ell -1}(x)\bigr),\ldots ,
\bigl(a^{r,0}(x),\ldots ,a^{r,\ell -1}(x)\bigr)\right\} \subset
R'.\end{equation*} Then $C$, as a subset of
$\displaystyle{\bigoplus_{i=1}^s \E_i^{\ell}}$, can be written as
\begin{equation} \label{constituents-2}
C=C_1\oplus \cdots  \oplus C_{s},
\end{equation}
where each $C_i$ (constituent) is an $\E_i$-linear code of length
$\ell$ and described as
\begin{equation}\label{explicit constituents-2}
C_i  = \Span_{\E_i}\left\{\left(a_{i,0}^{b,0},\ldots
,a_{i,\ell-1}^{b,\ell -1}\right): 1\leq b \leq r \right\}, \
\mbox{for $1\leq i \leq s$}.
\end{equation}
\end{prop}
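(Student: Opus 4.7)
The plan is to trace the image of $C$ through the CRT isomorphism (\ref{CRT-6}), verify that the decomposition in (\ref{constituents-2}) respects the $\F_q[x]$-module structure, and then read off the explicit constituents in (\ref{explicit constituents-2}) by simply evaluating the generators componentwise.

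First I would observe that the isomorphism (\ref{CRT-6}) is not only an isomorphism of $\F_q$-vector spaces but of $\F_q[x]$-modules, because it is a product of ring isomorphisms (\ref{CRT-4}). Under this identification, $\F_q[x]$ acts on the $i$-th summand $\E_{i,0}\times\cdots\times\E_{i,\ell-1}\subset\E_i^\ell$ through $x\equiv\alpha_i\pmod{f_i(x)}$, i.e.\ by coordinatewise multiplication by $\alpha_i$. Second, because $f_1,\ldots,f_s$ are pairwise coprime, there exist CRT idempotents $e_i(x)\in\F_q[x]$ with $e_i\equiv\delta_{ij}\pmod{f_j(x)}$; these act on $\bigoplus_{i=1}^s\E_i^\ell$ as the projections onto each summand. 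Therefore any $\F_q[x]$-submodule $M$ of $\bigoplus_{i=1}^s\E_i^\ell$ splits as $M=\bigoplus_i e_iM$, and on $\E_i^\ell$ the $\F_q[x]$-action factors through $\F_q[x]/\langle f_i\rangle=\E_i$, so each $e_iM$ is automatically an $\E_i$-subspace of $\E_i^\ell$. Applied to (the image of) $C$ this gives the decomposition (\ref{constituents-2}) with each $C_i$ an $\E_i$-linear code of length $\ell$.

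Third, to obtain the explicit description (\ref{explicit constituents-2}), I would apply (\ref{CRT-6}) to each generator: the prescription stated just after (\ref{CRT-6}) sends $\bigl(a^{b,0}(x),\ldots,a^{b,\ell-1}(x)\bigr)$ to $\sum_{i=1}^{s}\bigl(a_{i,0}^{b,0},a_{i,1}^{b,1},\ldots,a_{i,\ell-1}^{b,\ell-1}\bigr)$, whose $i$-th summand is exactly the vector appearing on the right-hand side of (\ref{explicit constituents-2}). Since $C$ is $\F_q[x]$-generated by the $r$ given generators, its image is the $\F_q[x]$-span of their images; projecting via $e_i$ and using that $\F_q[x]$-span equals $\E_i$-span on $\E_i^\ell$ recovers $C_i$ in the stated form.

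The calculation itself is essentially bookkeeping, so the real work is conceptual: the one point that needs care is the passage in the second step from an $\F_q[x]$-submodule of $\bigoplus_i\E_i^\ell$ to a direct sum of $\E_i$-subspaces. I expect this to be the main (and essentially only) obstacle, together with keeping the notation $\E_{i,j}\in\{\E_i,\{0\}\}$ straight so that each $C_i$ is legitimately embedded in $\E_i^\ell$ despite some coordinates being forced to zero when $v_{i,j}=0$.
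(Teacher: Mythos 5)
Your proof is correct and follows essentially the same route as the paper: push $C$ through the CRT isomorphism (\ref{CRT-6}), read off the constituents by evaluating the generators at the roots $\alpha_i$, and use $\E_i=\F_q(\alpha_i)$ to pass from $\F_q[x]$-span to $\E_i$-span. The one place where you are more explicit than the paper is the direct-sum step: the paper simply writes $C$ as the set of $\F_q[x]$-linear combinations of the generators and then asserts, appealing to (\ref{CRT-6}), that ``$C_i$ is of the form $\ldots$'', leaving the reader to infer that the CRT splitting of the ambient module automatically splits the submodule; you instead invoke the CRT idempotents $e_i$ acting on any $\F_q[x]$-submodule, which makes the directness of $C=\bigoplus_i C_i$ and the $\E_i$-linearity of each $C_i$ transparent. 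This is a small but genuine gain in rigor; the computational content (evaluation of generators and surjectivity of $g\mapsto g(\alpha_i)$ onto $\E_i$) is identical in both arguments.
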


\begin{proof}
Observe that $C$, as a subset of $R'$, can be written as
$$C=\left\{g_1(x)\left(a^{1,0}(x),\ldots ,a^{1,\ell -1}(x) \right)+ \cdots + g_r(x)\left(a^{r,0}(x),\ldots ,a^{r,\ell -1}(x) \right): \ g_1,\ldots ,g_r \in \F_q[x]  \right\}.$$
Then by (\ref{CRT-6}), $C_i$ is of the form
$$C_i=\left\{g_1(\alpha_i)\left(a_{i,0}^{1,0},\ldots
,a_{i,\ell-1}^{1,\ell -1} \right) + \cdots +
g_r(\alpha_i)\left(a_{i,0}^{r,0},\ldots ,a_{i,\ell-1}^{r,\ell -1}
\right): \ g_1,\ldots ,g_r \in \F_q[x] \right\}.$$ Since $\alpha_i$
is a root of $f_i(x)$, we have $\E_i=\F_q(\alpha_i)$. Therefore the
elements $g_1(\alpha_i),\ldots ,g_r(\alpha_i)$ take all possible
values in $\E_i$ as the polynomials $g_1,\ldots ,g_r$ range over
$\F_q[x]$. Hence the result follows.
\end{proof}

\begin{rem}
Depending on $v_{i,j}$'s in the factorization
(\ref{irreducibles-2}), some $\E_{i,j}$'s can be $\{0\}$ and hence
corresponding coordinates of all the codewords in the related
constituent will be 0.
\end{rem}

\begin{ex}
Let $q=2$, $m_0=3$, $m_1=5$, $m_2=9$ and hence $\ell=3$. We have
$$R'=R_0 \times R_1 \times R_2 = \F_2[x]/\langle x^3-1\rangle \times
\F_2[x]/\langle x^5-1\rangle \times \F_2[x]/\langle x^9-1\rangle$$
and
\begin{eqnarray*}
x^3-1 &=& (x+1)(x^2+x+1),\\
x^5-1 &=& (x+1)(x^4+x^3+x^2+x+1),\\
x^9-1 &=& (x+1)(x^2+x+1)(x^6+x^3+1).
\end{eqnarray*}
Let $f_1(x)=x+1$, $f_2(x)=x^2+x+1$, $f_3(x)=x^4+x^3+x^2+x+1$ and
$f_4(x)=x^6+x^3+1$. Then we have $\E_1\simeq \F_2$, $\E_2\simeq
\F_4$, $\E_3\simeq \F_{16}$ and $\E_4\simeq \F_{64}$. Moreover, with
the notation in (\ref{extensions}), we have the following:
\[\begin{array}{llll} \E_{1,0}=\E_1 \ & \E_{2,0}=\E_2 \ & \E_{3,0}=\{0\} \ & \E_{4,0}=\{0\} \ \\
\E_{1,1}=\E_1 \ & \E_{2,1}=\{0\} \ & \E_{3,1}=\E_3 \ & \E_{4,1}=\{0\} \ \\
\E_{1,2}=\E_1 \ & \E_{2,2}=\E_2 \ & \E_{3,2}=\{0\} \ & \E_{4,2}=\E_4 \ \\
\end{array} \]
Hence,
$$R'\simeq (\E_1 \times \E_1 \times \E_1) \oplus (\E_2 \times \{0\} \times \E_2) \oplus (\{0\}\times \E_3 \times \{0\}) \oplus (\{0\}\times \{0\}\times \E_4).$$
Let us fix roots of $f_1,\ldots, f_4$ as $\alpha_1=1,
\alpha_2,\alpha_3,\alpha_4$. If $C \subset R'$ is a GQC code
generated by $\langle \vec{g}_1(x),\ldots,\vec{g}_r(x)\rangle$,
where
$$\vec{g}_b(x)=\left(g^{b,0}(x),g^{b,1}(x),g^{b,2}(x) \right), \ 1\leq b \leq r,$$
then $C$ has the following constituents:
\begin{eqnarray*}
C_1 &=& \Span_{\F_2}\left\{\left(g^{b,0}(1),g^{b,1}(1),g^{b,2}(1)\right)\ :\ 1\leq b \leq r \right\},\\
C_2 &=& \Span_{\F_4}\left\{\left(g^{b,0}(\alpha_2),0,g^{b,2}(\alpha_2)\right)\ :\ 1\leq b \leq r \right\},\\
C_3 &=& \Span_{\F_{16}}\left\{\left(0,g^{b,1}(\alpha_3),0 \right)\ :\ 1\leq b \leq r \right\},\\
C_4 &=& \Span_{\F_{64}}\left\{\left(0,0,g^{b,2}(\alpha_4)\right)\ :\
1\leq b \leq r \right\}.
\end{eqnarray*}
\end{ex}

\section{Concatenated Structure and Trace Representation} \label{trace section}
Our goal is to obtain, as in the QC codes, a concatenated
description and its relation to constituent decomposition for GQC
codes. For this purpose, some further notation needs to be
introduced. We will also continue using the notation of the previous
section.

For $i,j$ such that $f_i(x) \mid x^{m_j}-1$, let $\theta_{i,j}$
denote the primitive idempotent generator of the minimal cyclic code
of length $m_j$ in $R_j$, whose check polynomial is $f_i(x)$. Let
$0_{j}$ denote the zero codeword of length $m_j$ (or the zero
polynomial in $R_j$). Then define the following polynomials for each
$i$ and $j$:
\begin{equation} \label{idempotent}
I_{i,j} :=
\begin{cases}
   \theta_{i,j}(x),       & \text{if } f_i(x)\mid x^{m_j}-1, \\
   0_{j}       & \text{otherwise.}
  \end{cases}
\end{equation}

Now we can define the following analogues of the maps in
(\ref{isoms}) for each block of length $m_j$ and each $1\leq i\leq
s$:
\begin{eqnarray} \label{isoms-2}
\begin{array}{cccc} \varphi_{i,j}:& \langle I_{i,j} \rangle
& \longrightarrow & \hspace{0.5cm} \E_{i,j} \\ \hspace{0.5cm}  &
a(x)& \longmapsto & a_{i,j}
\end{array}
& \enskip & \begin{array}{cccc} \psi_{i,j}: & \E_{i,j}  & \longrightarrow & \langle I_{i,j} \rangle \\
\hspace{0.5cm} & \delta & \longmapsto & \sum\limits_{k_j=0}^{m_j-1}
a_{k_j}x^{k_j}
\end{array}\ \ ,
\end{eqnarray}
where
$$a_{k_j}=\frac{1}{m_j} \Tr_{\E_i/\F_q}(\delta \alpha_i^{-k_j}).$$
Note that $\langle I_{i,j} \rangle = \langle 0_{j} \rangle$,
$\E_{i,j}= \{0\}$ and $a_{i,j}=0$ are equivalent and all amount to
$f_i(x) \nmid x^{m_j}-1$. Then, $\varphi_{i,j}$ and $\psi_{i,j}$ are
well-defined $\E_i$-linear isomorphisms and they are inverses to
each other for all $i$ and $j$. Moreover, when $\E_{i,j}=\E_i$,
hence $I_{i,j}=\theta_{i,j}$, $\psi_{i,j}$ and $\phi_{i,j}$ are
known to be field isomorphisms. In particular, if
$m_0=\cdots=m_{\ell-1}$, then we obtain the isomorphisms in
(\ref{isoms}) for the QC case.

Note that $R'=R_0\times \cdots \times R_{\ell -1}$ and
$\E_{i,0}\times \cdots \times \E_{i,\ell -1}$ (for each $1\leq i
\leq s$) are rings with coordinate-wise addition and multiplication.
The multiplicative identity of $R'$ is clearly $1_{R'}:=(1,\ldots
,1)$. For all $1\leq i \leq s$, $0\leq j \leq \ell -1$, set
\begin{equation}\label{one}
1_{i,j}:= \begin{cases}
   1_{\E_i},   \text{ \hspace{0.65cm} if } \E_{i,j}=\E_i, \\
   0,   \text{ \hspace{0.95cm} if } \E_{i,j}=\{0\}.
  \end{cases}
\end{equation}
Then, $1_i:=(1_{i,0},\ldots ,1_{i,\ell -1})$ is the multiplicative
identity of $\E_{i,0}\times \cdots \times \E_{i,\ell -1}$ for each
$1\leq i \leq s$. Note also that $\psi_{i,j}(1_{i,j})=I_{i,j}$ for
all $i,j$.

For $i=1,\ldots , s$, we now define two other maps (cf.
(\ref{evaluation}) and (\ref{isoms-2})).
\begin{eqnarray} \label{isoms-3} \begin{array}{ccccc}
\Phi_i& : & R_0 \times \cdots \times R_{\ell -1} & \longrightarrow &
\E_{i,0}\times \cdots \times \E_{i,\ell -1} \\ & & \left(a^0(x),
\ldots, a^{\ell -1}(x)\right)& \longmapsto & \left(a^0_{i,0},\ldots
, a^{\ell -1}_{i,\ell -1}\right)
\end{array} \\
\begin{array}{ccccc} \label{isoms-4}
\Psi_i& : & \E_{i,0}\times \cdots \times \E_{i,\ell -1} &
\longrightarrow & R_0 \times \cdots \times R_{\ell -1} \\ & &
\left(\delta_0,\ldots , \delta_{\ell -1}\right)& \longmapsto &
\left(\psi_{i,0}(\delta_0),\ldots , \psi_{i,\ell -1}(\delta_{\ell
-1}) \right)
\end{array}
\end{eqnarray}
Note that for each $i$, $\Phi_i$ and $\Psi_i$ are $\F_q$-linear maps
and they are also ring homomorphisms. Moreover, when $\Phi_i$ is
restricted to $\langle I_{i,0}\rangle \times \cdots \times \langle
I_{i,\ell -1}\rangle$, they are inverse to each other (cf.
(\ref{isoms-2})). For $i=1,\ldots , s$, we set
$I_i:=\left(I_{i,0},\ldots ,I_{i,\ell -1} \right) \in R'$. We have
$\Psi_i \left(1_i \right)=I_i$ and the ideal generated by $I_i$ in
$R'$ is nothing but $\langle I_{i,0}\rangle \times \cdots \times
\langle I_{i,\ell -1}\rangle$. The next result follows immediately
from the definition of $I_i$'s and the analogous results on
primitive idempotents of cyclic codes (cf. \cite[Theorem
6.4.4]{vL}). Recall that the multiplication and addition in $R'$ are
coordinate-wise.

\begin{lem}\label{idemp properties} The following identities hold in $R'$:
\begin{itemize}
\item[i.] $I_i \cdot I_i = I_i$, for all $i=1,\ldots ,s$.
\item[ii.] $I_u \cdot I_v = 0$, if $u\not= v$.
\item[iii.] $I_1+\cdots + I_s =1_{R'}$.
\end{itemize}
\end{lem}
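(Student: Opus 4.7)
The plan is to reduce every identity coordinate-wise, since by definition multiplication and addition in $R' = R_0 \times \cdots \times R_{\ell-1}$ are componentwise. Thus it suffices to verify each of (i), (ii), (iii) in each individual $R_j$, at which point the statements become assertions about primitive idempotents of the cyclic codes of length $m_j$ over $\F_q$, which are already covered by the classical theory (e.g. \cite[Theorem 6.4.4]{vL}). I expect no real obstacle; the main bookkeeping issue is handling those indices $i$ for which $f_i(x) \nmid x^{m_j}-1$, where $I_{i,j}$ is defined to be $0_j$ rather than a genuine idempotent.

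For part (i), fix $i$ and $j$. If $f_i(x)\nmid x^{m_j}-1$, then $I_{i,j}=0_j$ and trivially $I_{i,j}^2=I_{i,j}$. Otherwise $I_{i,j}=\theta_{i,j}$ is a primitive idempotent in $R_j$, so $\theta_{i,j}^2=\theta_{i,j}$ by definition. For part (ii), fix $u\neq v$ and $j$. If either $I_{u,j}$ or $I_{v,j}$ is $0_j$ the claim is immediate, so assume both $f_u(x)$ and $f_v(x)$ divide $x^{m_j}-1$. Then $I_{u,j}=\theta_{u,j}$ and $I_{v,j}=\theta_{v,j}$ are the primitive idempotent generators of two \emph{distinct} minimal cyclic codes in $R_j$ (distinct because $f_u\ne f_v$ are distinct irreducible factors of $x^{m_j}-1$), so $\theta_{u,j}\theta_{v,j}=0_j$ by the pairwise orthogonality of primitive idempotents.

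For part (iii), fix $j$ and look at the $j$-th coordinate of $I_1+\cdots+I_s$, namely $\sum_{i=1}^{s} I_{i,j}$. The summands with $f_i(x)\nmid x^{m_j}-1$ contribute $0_j$, so this sum equals $\sum_{i:\,f_i\mid x^{m_j}-1}\theta_{i,j}$. By (\ref{irreducibles-2}), the irreducibles $f_i(x)$ dividing $x^{m_j}-1$ are precisely the pairwise coprime irreducible factors of $x^{m_j}-1$, and the $\theta_{i,j}$'s are the complete set of primitive idempotents of $R_j$. Their sum is $1$ in $R_j$ by the classical decomposition of $R_j$ into a direct sum of minimal cyclic codes. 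Since this holds in every coordinate, $I_1+\cdots+I_s=(1,\ldots,1)=1_{R'}$, which finishes the proof.
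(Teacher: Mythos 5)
Your proof is correct and follows essentially the same route the paper intends: the paper itself offers no written argument, only the remark that the lemma ``follows immediately from the definition of $I_i$'s and the analogous results on primitive idempotents of cyclic codes (cf.\ \cite[Theorem 6.4.4]{vL})'' together with the coordinate-wise structure of $R'$. You have simply spelled out that coordinate-wise reduction, including the necessary bookkeeping for the indices where $I_{i,j}=0_j$, which is exactly the argument the authors have in mind.
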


The next result will be used in proving the concatenated structure
of GQC codes.
\begin{thm}\label{mimicjensen}
With the notation above, we have
$$R'=\bigoplus_{i=1}^s \langle I_i \rangle.$$
\end{thm}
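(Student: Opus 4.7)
The plan is to deduce this decomposition directly from the three identities in Lemma \ref{idemp properties}, which is the standard argument for decomposing a commutative ring via a complete system of pairwise orthogonal idempotents. No new structural input beyond the lemma should be needed.

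First I would show that the sum $\sum_{i=1}^s \langle I_i \rangle$ exhausts $R'$. Given any $r \in R'$, identity (iii) of the lemma yields
\[
r \;=\; r \cdot 1_{R'} \;=\; r \cdot \bigl(I_1 + \cdots + I_s\bigr) \;=\; \sum_{i=1}^{s} r \cdot I_i,
\]
and each $r \cdot I_i$ lies in the principal ideal $\langle I_i \rangle$. So $R' = \langle I_1 \rangle + \cdots + \langle I_s \rangle$.

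Next I would verify that this sum is direct. Suppose we have a relation $b_1 + \cdots + b_s = 0$ with $b_i \in \langle I_i \rangle$, so that $b_i = a_i I_i$ for some $a_i \in R'$. Multiplying by $I_j$ and invoking (i) and (ii) of the lemma gives
\[
0 \;=\; I_j \cdot \sum_{i=1}^{s} a_i I_i \;=\; \sum_{i=1}^{s} a_i (I_i I_j) \;=\; a_j I_j \;=\; b_j,
\]
for every $j$. Hence each summand vanishes, establishing the directness of the decomposition.

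There is essentially no obstacle beyond carefully unwinding the notation: the main ingredients—pairwise orthogonality, idempotence, and completeness of the $I_i$—are already supplied by Lemma \ref{idemp properties}, which in turn was reduced to the corresponding statement for primitive idempotents in each single-block ring $R_j$ via the coordinatewise structure of $R'$. So the proof is a short two-line application of the lemma, and no further computation is required.
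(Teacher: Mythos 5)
Your proof is correct and genuinely different from the one given in the paper. You invoke Lemma \ref{idemp properties} directly: completeness (identity (iii)) yields $R' = \sum_i \langle I_i \rangle$ because $r = r\cdot 1_{R'} = \sum_i rI_i$, and orthogonality plus idempotence (identities (i) and (ii)) give directness, since multiplying a relation $\sum_i a_iI_i = 0$ by $I_j$ isolates the $j$-th summand. This is the standard ring-theoretic decomposition via a complete system of pairwise orthogonal idempotents, and it is watertight. The paper, by contrast, does not use Lemma \ref{idemp properties} here at all: it proves directness by a coordinate-wise case analysis (for each block $j$, either $f_u \nmid x^{m_j}-1$ and the $j$-th coordinate is forced to $0$, or both $f_u,f_v \mid x^{m_j}-1$ and the two minimal cyclic codes in $R_j$ meet trivially), and then proves surjectivity by a dimension count, showing $\sum_i \dim_{\F_q}\langle I_i\rangle = \sum_j m_j = \dim_{\F_q} R'$ using separability of each $x^{m_j}-1$. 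Your argument is shorter and more conceptual; the paper's argument buys an explicit formula for $\dim_{\F_q}\langle I_i\rangle$ as a byproduct and avoids relying on a lemma whose proof was itself omitted (the paper only cites the analogous facts for primitive idempotents of cyclic codes). Either route is legitimate; yours is the one most readers would expect once Lemma \ref{idemp properties} has been stated.
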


\begin{proof}
We first show that the sum is direct in $R'$. Let $(g_0(x),\ldots
,g_{\ell -1}(x))$ be an element of $\langle I_u \rangle \cap \langle
I_v \rangle$ for some $u\not=v \in \{1,\ldots ,s\}$. Since $\langle
I_u \rangle = \langle I_{u,0}\rangle \times \cdots \times \langle
I_{u,\ell -1}\rangle$ and  $\langle I_v \rangle = \langle
I_{v,0}\rangle \times \cdots \times \langle I_{v,\ell -1}\rangle$,
we have $g_j(x)\in \langle I_{u,j}\rangle \cap \langle
I_{v,j}\rangle$ for all $0\leq j \leq \ell -1$. If one of the
irreducible polynomials $f_u(x)$ or $f_v(x)$ does not divide
$x^{m_j}-1$, say $f_u$, then $\langle I_{u,j}\rangle=\langle 0_j
\rangle$. Therefore $g_j(x)=0$ in this case. If both $f_u(x)$,
$f_v(x)$ divide $x^{m_j}-1$, then $\langle I_{u,j}\rangle$
(respectively, $\langle I_{v,j}\rangle$) is the minimal cyclic code
generated by $(x^{m_j}-1)/f_u(x)$ (respectively,
$(x^{m_j}-1)/f_v(x)$). Since these minimal cyclic codes intersect
trivially, we have $g_j(x)=0$ in this case too. Hence,
$(g_0(x),\ldots ,g_{\ell -1}(x))=(0,\ldots , 0)$ and the sum is
direct.

Clearly $\langle I_i \rangle \subset R'$ for each $i$. Recall that
when $\Phi_i$ is restricted to $\langle I_i \rangle$, $\Phi_i$ and
$\Psi_i$ are inverse $\F_q$-linear maps. Hence,
$\Psi_i\left(\E_{i,0}\times \cdots \times \E_{i,\ell -1} \right)=
\langle I_i\rangle$ and
$$\dim_{\F_q} \langle I_i\rangle=\dim_{\F_q} \left( \E_{i,0}\times \cdots \times \E_{i,\ell -1} \right)=\sum_{\footnotesize{\begin{array}{c}0\leq j \leq \ell -1 \\ f_i(x)\mid (x^{m_j}-1) \end{array}}} \deg f_i,$$ for all $1\leq i \leq s$. Then,
\begin{eqnarray*}
\dim_{\F_q} \bigoplus_{i=1}^s \langle I_i \rangle & = & \sum_{i=1}^{s} \sum_{\footnotesize{\begin{array}{c}0\leq j \leq \ell -1 \\ f_i(x)\mid (x^{m_j}-1) \end{array}}} \deg f_i\\
& = & \sum_{j=0}^{\ell -1} \sum_{\footnotesize{\begin{array}{c}1\leq
i \leq s \\ f_i(x)\mid (x^{m_j}-1) \end{array}}} \deg f_i .
\end{eqnarray*}
For each $0\leq j \leq \ell -1$, we have
$$\sum_{\footnotesize{\begin{array}{c}1\leq i \leq s \\ f_i(x)\mid (x^{m_j}-1) \end{array}}} \deg f_i =m_j,$$
since $\gcd(q,m_j)=1$ and hence $x^{m_j}-1$ is separable. Therefore
$$\dim_{\F_q} \bigoplus_{i=1}^s \langle I_i \rangle=\sum_{j=0}^{\ell -1} m_j.$$
Note that $(m_0+m_1+\cdots + m_{\ell -1})$ is also the
$\F_q$-dimension of $R'$ and therefore the result follows.
\end{proof}

%The next Lemma will also be used in the concatenation result for GQC codes.

%\begin{lem} \label{ff-fact}
%Let $m_e$, $m_g$ be positive integers relatively prime to $q$, and let $\xi_e$, $\xi_g$ denote primitive $m_e$ and $m_g$ roots of 1. Assume that $f(x)\in \F_q[x]$ is an irreducible polynomial different than $x-1$ which divides %both $x^{m_e}-1$ and $x^{m_g}-1$. Suppose $u_e$ and $u_g$ are minimal positive integers such that $f(\xi_e^{u_e})=f(\xi_g^{u_g})=0$. Then, $\xi_e^{u_e}=\xi_g^{u_g}$.
%\end{lem}

\begin{rem}\label{concat-defn}
For any $i\in \{1,\ldots ,s\}$ and an $\E_i$-linear code
$\mathfrak{C}_i \subset \E_{i,0}\times \cdots \times \E_{i,\ell -1}$
of length $\ell$, concatenation with $\langle I_i \rangle = \langle
I_{i,0} \rangle \times \cdots \times \langle I_{i,\ell -1} \rangle
\subset R'$ is carried out by the map $\Psi_i$ in (\ref{isoms-4}).
Namely,
$$\langle I_i \rangle \Box \mathfrak{C}_i := \left\{\left(\psi_{i,0}(c_0),\ldots,\psi_{i,\ell-1}(c_{\ell-1})\right) :
(c_0,\ldots,c_{\ell-1}) \in \mathfrak{C}_i \right\}.$$
\end{rem}

After this preparation, we can now generalize Theorem \ref{Jensen's
thm} for a GQC code $C \subset R'$ of length $m_0 +\cdots
+m_{\ell-1}$ over $\Fq$.

\begin{thm} \label{generalized Jensen}
(i) Let $C \subset R'$ be a GQC code and $\tilde{C}_i:=C\cdot I_i
\subset R'$ for each $1\leq i \leq s$. Then,
$$C=\bigoplus_{i=1}^s \tilde{C}_i.$$
Moreover, for the $\E_i$-linear code
$\mathfrak{C}_i:=\Phi_i(\tilde{C}_i) \subset \E_{i,0}\times \cdots
\times \E_{i,\ell -1}$ of length $\ell$, we have
$\tilde{C}_i=\langle I_i \rangle \Box \mathfrak{C}_i$ (for all $i$),
so that
$$C=\bigoplus_{i=1}^s \langle I_i \rangle \Box \mathfrak{C}_i.$$

(ii) Conversely, let $\mathfrak{C}_i \subseteq( \E_{i,0} \times
\cdots \times \E_{i,\ell-1})$ be an $\E_i$-linear code of length
$\ell$ for each $i\in \{1,\ldots ,s\}$. Then,
$C=\displaystyle\bigoplus_{i=1}^s  \langle I_{i} \rangle \Box
\mathfrak{C}_i$ is a $q$-ary GQC code of length
$m_0+\cdots+m_{\ell-1}$.
\end{thm}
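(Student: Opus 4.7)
The plan is to mimic the proof of Theorem \ref{Jensen's thm} (Jensen's theorem for QC codes), replacing the single idempotent $\theta_i \in R$ with the tuple $I_i \in R'$, and leveraging the orthogonality/completeness identities from Lemma \ref{idemp properties} together with the internal direct sum $R' = \bigoplus_i \langle I_i\rangle$ from Theorem \ref{mimicjensen}.

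For part (i), I would first establish the non-obvious inclusion $\tilde{C}_i \subseteq C$. The subtlety is that $I_i \in R'$ has distinct polynomial entries in different coordinates, so coordinate-wise multiplication by $I_i$ in $R'$ is \emph{a priori} different from the $\F_q[x]$-action on $C$. The key observation is that the $f_1,\dots,f_s$ are pairwise coprime in $\F_q[x]$, so by the classical CRT on $\F_q[x]/\langle f_1\cdots f_s\rangle$ there exists a single polynomial $g_i(x) \in \F_q[x]$ satisfying $g_i\equiv 1\pmod{f_i}$ and $g_i\equiv 0\pmod{f_k}$ for $k\neq i$. Reducing modulo $x^{m_j}-1 = \prod_{k:\,v_{k,j}=1} f_k(x)$ in each coordinate, $g_i$ reduces to precisely $I_{i,j}$ (namely $\theta_{i,j}$ when $v_{i,j}=1$, and $0$ otherwise). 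Hence the coordinate-wise product $c\cdot I_i$ equals the diagonal $\F_q[x]$-action $g_i(x)\cdot c$, so $\tilde{C}_i = g_i\cdot C\subseteq C$. Combining this with $1_{R'}=\sum_i I_i$ from Lemma \ref{idemp properties}(iii) yields $c = \sum_i c\cdot I_i \in \sum_i \tilde{C}_i$, so $C=\sum_i\tilde{C}_i$. Since $\tilde{C}_i\subseteq\langle I_i\rangle$ and Theorem \ref{mimicjensen} guarantees that $\bigoplus_i\langle I_i\rangle$ is direct inside $R'$, the sum is automatically direct, giving $C = \bigoplus_i\tilde{C}_i$.

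For the concatenation identity, I would invoke the fact that $\Phi_i$ restricted to $\langle I_i\rangle$ is the inverse of $\Psi_i$; since $\tilde{C}_i\subseteq\langle I_i\rangle$, applying $\Psi_i$ to $\mathfrak{C}_i=\Phi_i(\tilde{C}_i)$ recovers $\tilde{C}_i$, and the definition of $\Box$ in Remark \ref{concat-defn} then reads $\tilde{C}_i = \langle I_i\rangle\Box\mathfrak{C}_i$. To see that $\mathfrak{C}_i$ is $\E_i$-linear, write a given $\delta\in\E_i$ as $\delta=h(\alpha_i)$ for some $h(x)\in\F_q[x]$; then for any $c\in\tilde{C}_i$ one has $h(x)\cdot c\in\tilde{C}_i$ (closure under the diagonal $\F_q[x]$-action, via the same $g_i$-argument applied to $h(x)g_i(x)$), and because $\Phi_i$ is coordinate-wise evaluation at $\alpha_i$, $\Phi_i(h(x)\cdot c) = \delta\cdot\Phi_i(c)$, showing $\E_i$-closure.

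For part (ii), I would check that each summand $\langle I_i\rangle\Box\mathfrak{C}_i = \Psi_i(\mathfrak{C}_i)$ is an $\F_q[x]$-submodule of $R'$ and then take the direct sum. Since each $\psi_{i,j}$ is an $\E_i$-linear isomorphism under which the coordinate shift on $\langle I_{i,j}\rangle$ corresponds to multiplication by $\alpha_i$, we have
$$x\cdot\Psi_i(\delta_0,\dots,\delta_{\ell-1}) = \Psi_i(\alpha_i\delta_0,\dots,\alpha_i\delta_{\ell-1}),$$
and $\E_i$-linearity of $\mathfrak{C}_i$ puts the right-hand side back in $\Psi_i(\mathfrak{C}_i)$. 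A direct sum of $\F_q[x]$-submodules of $R'$ is again such a submodule, and hence a GQC code of length $m_0+\cdots+m_{\ell-1}$. The one genuinely non-routine step is the CRT lift $g_i$, which reconciles coordinate-wise multiplication by the mixed-length idempotent $I_i$ with the diagonal $\F_q[x]$-action; everything else is bookkeeping parallel to Jensen's original argument.
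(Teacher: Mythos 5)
Your proof is correct and follows the same overall structure as the paper's: decompose $C$ via the idempotent tuples $I_i$, identify each summand $\tilde{C}_i$ with a concatenation via $\Psi_i$ and Remark~\ref{concat-defn}, and for the converse check closure under multiplication by $x$ using the intertwining $x\cdot\Psi_i(c_0,\ldots,c_{\ell-1})=\Psi_i(\alpha_ic_0,\ldots,\alpha_ic_{\ell-1})$. The one place you go beyond the paper is the CRT lift $g_i(x)\in\F_q[x]$ with $g_i\equiv 1\pmod{f_i}$ and $g_i\equiv 0\pmod{f_k}$ for $k\neq i$, whose reduction modulo $x^{m_j}-1=\prod_{k:v_{k,j}=1}f_k$ is precisely $I_{i,j}$ in each block; this shows that coordinate-wise multiplication by the mixed tuple $I_i$ agrees with the diagonal $\F_q[x]$-action of $g_i$, hence $\tilde{C}_i=C\cdot I_i=g_i\cdot C\subseteq C$. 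The paper's proof writes $C=C\cdot 1_{R'}=\sum_i\tilde{C}_i$ and then concludes directness from $\tilde{C}_i\subseteq\langle I_i\rangle$, but it never addresses why $\tilde{C}_i\subseteq C$ in the first place; since a GQC code is only required to be an $\F_q[x]$-submodule of $R'$ (not an $R'$-submodule), closure under multiplication by $I_i$ is not automatic, and your lift supplies exactly the missing justification. This same lift also cleanly handles the $\E_i$-linearity of $\mathfrak{C}_i=\Phi_i(\tilde{C}_i)$, as you note. So your argument is a slightly tighter version of the paper's, not a different route.
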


\begin{proof}
(i) By Lemma \ref{idemp properties}, we have
$$C=C\cdot 1_{R'}=C\cdot \sum_{i=1}^s I_i=\sum_{i=1}^s \tilde{C}_i.$$
Since $\tilde{C}_i \subset \langle I_i \rangle$ for each $i$ and
$\langle I_i \rangle$'s are pairwise intersecting trivially (Theorem
\ref{mimicjensen}), we conclude that $C=\oplus_i \tilde{C}_i$.

We have
\begin{eqnarray*}
\tilde{C}_i & = & \left\{\left(c^0(x),\ldots ,c^{\ell -1}(x)\right)\cdot \left(I_{i,0}(x),\ldots ,I_{i,\ell -1}(x)\right): \  \left(c^0(x),\ldots ,c^{\ell -1}(x)\right)\in C \right\}\\
& = & \left\{\left(c^0(x)I_{i,0}(x),\ldots ,c^{\ell -1}(x)I_{i,\ell
-1}(x)\right): \  \left(c^0(x),\ldots ,c^{\ell -1}(x) \right)\in C
\right\} \subset \langle I_i \rangle.
\end{eqnarray*}
Since $\Phi_i$ restricted to $\langle I_i \rangle$ is an isomorphism
((\ref{isoms-3}) and (\ref{isoms-4})), the last expression is equal
to
$$\left\{\left(\psi_{i,0}\left(d^0_{i,0}\right),\ldots ,\psi_{i,\ell -1}\left(d^{\ell -1}_{i,\ell -1}\right) \right): \ \left(d^0(x),\ldots , d^{\ell -1}(x) \right)\in \Phi_i(\tilde{C}_i) \right\},$$
which is nothing but $\langle I_i \rangle \Box \Phi_i(\tilde{C}_i)$
(cf. (Remark \ref{concat-defn}).

(ii) The concatenation has the form
\begin{equation*} \label{converse}
\langle I_i \rangle \Box \mathfrak{C}_i  = \left\{ \left(
\psi_{i,0}(c_0),\ldots, \psi_{i,\ell-1}(c_{\ell-1})\right): \
(c_0,\ldots ,c_{\ell-1})\in \mathfrak{C}_i \right\} .
\end{equation*}
Note that each $\psi_{i,j}(c_j)$ is an element of $\langle I_{i,j}
\rangle$. By $\F_q$-linearity of $\mathfrak{C}_i$ and
$\psi_{i,j}$'s, it is clear that the concatenation is an additive
subgroup of $R'$ which is closed under scalar multiplication by
elements of $\F_q$. Note that for a nonzero coordinate $c_j$ of a
codeword in $\mathfrak{C}_i$, $\psi_{i,j}$ identifies $\alpha_i c_j
\in \E_{i,j}=\E_i$ with $x\psi_{i,j}(c_j)\in I_{i,j}$, since it is a
field isomorphism between $\E_i$ and $\langle I_{i,j} \rangle
=\langle \theta_{i,j} \rangle$ in this case (see (\ref{isoms-2}) and
the discussion following it). Therefore we have
\begin{eqnarray*}
x\cdot \left( \psi_{i,0}(c_0),\ldots,
\psi_{i,\ell-1}(c_{\ell-1})\right) & = & \left( \psi_{i,0}(\alpha_i
c_0),\ldots,
\psi_{i,\ell-1}(\alpha_i c_{\ell-1})\right)\\
&=& \alpha_i \left( \psi_{i,0}(c_0),\ldots,
\psi_{i,\ell-1}(c_{\ell-1})\right)
\end{eqnarray*}
Since $\mathfrak{C}_i$ is an $\E_i$-linear code, $\alpha_i
(c_0,\ldots ,c_{\ell -1})$ is also a codeword of $\mathfrak{C}_i$.
Hence,
$$x\cdot \left( \psi_{i,0}(c_0),\ldots,
\psi_{i,\ell-1}(c_{\ell-1})\right)=x\cdot \Psi_i\left( c_0,\ldots
,c_{\ell -1} \right)$$ is also a codeword of $\langle I_i \rangle
\Box \mathfrak{C}_i$ and this concatenation is an
$\F_q[x]$-submodule of $R'$. If we take the direct sum of several
such concatenations, the result is again an submodule of $R'$, i.e.
a GQC code.
\end{proof}

\begin{rem} \label{const-outer}
Note from the proof of Theorem \ref{generalized Jensen} (i) that the
outer codes of the GQC code $C$ are of the form (for each $1\leq i
\leq s$):
\begin{eqnarray*}
\mathfrak{C}_i & = & \left\{\left(\varphi_{i,0}\left(c^0(x)I_{i,0}(x)\right),\ldots ,\varphi_{i,\ell -1}\left(c^{\ell -1}(x)I_{i,\ell -1}(x)\right)\right): \  \left(c^0(x),\ldots ,c^{\ell -1}(x) \right)\in C \right\} \\
& = & \left\{\left(\varphi_{i,0}\left(c^0(x)\right),\ldots
,\varphi_{i,\ell -1}\left(c^{\ell -1}(x)\right)\right): \
\left(c^0(x),\ldots ,c^{\ell -1}(x) \right)\in C \right\},
\end{eqnarray*}
where the last equality follows from
$\varphi_{i,j}(I_{i,j}(x))=1_{i,j}$. The outer code $\mathfrak{C}_i$
is nothing but the constituent $C_i$ of $C$ (Proposition
\ref{GQCconstituents}). Hence, the analogous result for QC codes
extends to GQC codes.
\end{rem}

As in Theorem \ref{traces}, we can obtain a trace representation for
the codewords of a given GQC code, which is straightforward by using
the isomorphism (concatenation map) in (\ref{isoms-4}).

\begin{thm} \label{traces-2}
Consider the $q$-ary GQC code $C$ of length $m_0+\cdots + m_{\ell
-1}$ with the constituents $C=C_{1}\oplus \cdots \oplus C_{s},$
where $C_i \subset \E_{i}^\ell$ is linear over $\E_{i}$ of length
$\ell$ for each $1\leq i \leq s$. Assume that each $m_j$ is
relatively prime to $q$ and let $\alpha_1,\ldots , \alpha_s$ be
fixed roots of the polynomials $f_1, \ldots ,f_s$, describing the
fields $\E_1, \ldots, \E_s$. Then an arbitrary codeword $c\in C$ has
the form
$$
c=\left(c_0(\lambda_1,\ldots , \lambda_s) \mid  c_1(\lambda_1,\ldots
, \lambda_s) \mid \cdots \mid c_{\ell -1}(\lambda_1,\ldots ,
\lambda_s) \right),
$$
where $\lambda_i=(\lambda_{i,0},\ldots ,\lambda_{i,\ell-1})$ is a
codeword in $C_i$, for each $i=1,\ldots,s$, and for $j\in \{0,\ldots
, \ell -1\}$, the $j^{th}$ column has length $m_j$ and it is of the
form
$$c_j(\lambda_1,\ldots , \lambda_s)=\frac{1}{m_j}\left(\sum\limits_{i=1}^{s}\Tr_{\E_{i}/\F_q}\left(\lambda_{i,j}\alpha_i^{-k_j}\right) \right)_{0\leq k_j \leq m_j-1} .$$
\end{thm}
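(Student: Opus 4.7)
The plan is to deduce the trace formula directly from the concatenated decomposition in Theorem \ref{generalized Jensen} together with the explicit formula for the map $\psi_{i,j}$ in (\ref{isoms-2}). By part (i) of Theorem \ref{generalized Jensen}, every codeword $c \in C$ can be written uniquely as
$$c = \sum_{i=1}^{s} \Psi_i(\lambda_i),$$
where $\lambda_i = (\lambda_{i,0},\ldots,\lambda_{i,\ell-1}) \in \mathfrak{C}_i \subset \E_{i,0} \times \cdots \times \E_{i,\ell-1}$. By Remark \ref{const-outer}, the outer code $\mathfrak{C}_i$ is exactly the constituent $C_i$, so each $\lambda_i$ is indeed a codeword of $C_i$. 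Thus it suffices to expand the $j$-th block of $c$ in terms of the $\psi_{i,j}(\lambda_{i,j})$.

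Next, I would fix $j \in \{0,\ldots,\ell-1\}$ and examine the $j$-th block of $c$, which by (\ref{isoms-4}) is the polynomial $\sum_{i=1}^{s} \psi_{i,j}(\lambda_{i,j}) \in R_j$, viewed as a vector of length $m_j$ over $\F_q$. Using the explicit formula for $\psi_{i,j}$ in (\ref{isoms-2}), the coefficient of $x^{k_j}$ (that is, the $k_j$-th coordinate of the $j$-th block) equals
$$\sum_{i=1}^{s} \frac{1}{m_j}\,\Tr_{\E_i/\F_q}\!\left(\lambda_{i,j}\,\alpha_i^{-k_j}\right),$$
which is exactly the claimed formula. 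This gives the required column $c_j(\lambda_1,\ldots,\lambda_s)$ for every $k_j \in \{0,\ldots,m_j-1\}$.

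The only subtlety is to handle the indices $i$ for which $\E_{i,j}=\{0\}$, i.e.\ those $i$ with $f_i(x) \nmid x^{m_j}-1$. In that case the map $\psi_{i,j}$ is the zero map, and $\lambda_{i,j} = 0$ by the definition of $\mathfrak{C}_i \subset \E_{i,0} \times \cdots \times \E_{i,\ell-1}$ in (\ref{extensions}); so the corresponding term $\Tr_{\E_i/\F_q}(\lambda_{i,j}\alpha_i^{-k_j})$ automatically vanishes and the sum in the statement can be written uniformly over $i=1,\ldots,s$ without restriction. The hypothesis $\gcd(m_j,q)=1$ guarantees that $1/m_j$ makes sense in $\F_q$, which is needed for the formula defining $\psi_{i,j}$. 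The main (mild) obstacle is just bookkeeping this uniformity across the two types of indices $i$; once this is observed, the theorem follows immediately by assembling the $\ell$ blocks.
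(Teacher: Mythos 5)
Your proposal is correct and takes essentially the same approach as the paper, which simply asserts the result is ``straightforward by using the isomorphism (concatenation map) in (\ref{isoms-4})''; you have just filled in the bookkeeping the paper leaves implicit. In particular, your appeal to Theorem~\ref{generalized Jensen}(i) and Remark~\ref{const-outer} to write $c=\sum_{i}\Psi_i(\lambda_i)$ with $\lambda_i\in C_i$, and then reading off the $k_j$-th coefficient of the $j$-th block via the explicit formula for $\psi_{i,j}$ in~(\ref{isoms-2}), is exactly the intended argument, including the observation that terms with $\E_{i,j}=\{0\}$ vanish automatically.
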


\begin{rem}
Note that for $m_0=\cdots=m_{\ell-1}$, this coincides with the trace
representation of a length $m\ell$ QC code (cf. Theorem
\ref{traces}). However, the trace representation in Theorem
\ref{traces} describes codewords by their rows whereas Theorem
\ref{traces-2} provides a column-wise description of codewords in a
GQC code.
\end{rem}

%\begin{rem}
%For each $j=0,\ldots,\ell-1$, each column vector
%$\left(c_{k_j,j}(\lambda_{1},\ldots,\lambda_{s})\right)_{0\leq k_j
%\leq m_j-1}$ above lies in the cyclic code $D_j$ of length $m_j$
%such that (see Theorem \ref{traces}) $$D_j=\left\{
%\left(\sum\limits_{i=1}^{s}\Tr_{\E_{i}/\F_q}\left(\lambda_{i,j}\xi_j^{-k_ju_{i,j}}
%\right)\right)_{0\leq k_j \leq m_j-1} ;\ \lambda_{i,j}\in
%\E_i\right\}.$$ If we set $b_j$ as the number of consecutive
%$u_{i,j}$'s for each $j$, then by using the BCH bound the
%representation above can be used to derive the minimum distance
%bound given by Siap and Kulhan in \cite{SK}, where it is proven that
%$$d(C)\geq \displaystyle\min_{0\leq j \leq\ell-1} (b_j+1),$$ as well
%as the improvement of this bound given by Esmaili and Yari in
%\cite{EY}, where they show that $$d(C)\geq
%\displaystyle\sum_{j=0}^{\ell-1} (b_j+1).$$
%\end{rem}

\begin{ex}
Let $m_0=3$, $m_1=5$ and $q=2$. We will consider a binary GQC code
$C$ of length $3+5=8$. We have
$$R'=R_0 \times R_1 = \F_2[x]/\langle x^3-1\rangle \times
\F_2[x]/\langle x^5-1\rangle $$ and
\begin{equation} \label{irred facts}
x^3-1 = (x+1)(x^2+x+1) \ \ \ \ \ x^5-1 = (x+1)(x^4+x^3+x^2+x+1).
\end{equation}
Therefore, $s=3$ and $R'$ decomposes as follows:
\begin{eqnarray*}
R' &\cong& \left(\F_2[x]/\langle x+1\rangle \times \F_2[x]/\langle x+1\rangle\right)\\
 &\oplus& \left(\F_2[x]/\langle x^2+x+1\rangle \times \{0\} \right)\\
 &\oplus& \left(\{0\} \times \F_2[x]/\langle x^4+x^3+x^2+x+1\rangle\right).
\end{eqnarray*}
Let $1,\xi_1,\xi_2$ be the fixed roots of the irreducible factors in
(\ref{irred facts}), respectively. Let $C_1 \subseteq \F_2^2$, $C_2
\subseteq \F_4^2$, $C_3 \subseteq \F_{16}^2$ be the constituents of
$C$. Note that the second (first) coordinate of every codeword in
$C_2$ (in $C_3$) must be zero due to the decomposition $R'$ above.
We write $\Tr_{\F_{16}/\F_2}(\alpha)=\Tr(\alpha)$ as short. Then, by
Theorem \ref{traces-2}, the codewords of $C$ are of the form (cf.
Theorem 6.7 and 6.14 in \cite{LS})
$$(z_1+2a-b|z_1-a+2b|z_1-a-b|z_2+\Tr(y)|z_2+\Tr(y\xi_2^{-1})|z_2+\Tr(y\xi_2^{-2})|z_2+\Tr(y\xi_2^{-3})|z_2+\Tr(y\xi_2^{-4})),$$
where $(z_1,z_2) \in C_1$, $a+\xi_1b \in C_2$ ($a,b \in \F_2$) and
$y \in C_3$.

Moreover, we can simplify this expression further, by using the fact
$2a=2b=0$ in $\F_2$ and by setting $y= c+\xi_2 d+\xi_2^2 e+\xi_2^3
f$, for some $c,d,e,f \in \F_2$, as follows:
$$(z_1+b|z_1+a|z_1+a+b|z_2+d+e+f|z_2+c+e+f|z_2+c+d+f|z_2+c+d+e|z_2+c+d+e+f),$$
where $(z_1,z_2) \in C_1$, $a+\xi_1b \in C_2$ and $c+\xi_2 d+\xi_2^2
e+\xi_2^3 f \in C_3$.
\end{ex}

\section{Multilevel Concatenated View of GQC Codes and a Minimum Distance Bound} \label{multilevel}
A direct sum of concatenated codes can be seen as a multilevel
(generalized) concatenation. Linear generalized concatenated codes
were introduced by Blokh and Zyablov (\cite{BZ}), which enabled
Jensen to obtain a minimum distance bound for QC codes
(\cite[Theorem 4]{J}; see also \cite[Theorem 3.3]{GO}). We refer to
Section 2 in Dumer's chapter \cite{D} for more information on
multilevel concatenation.

Note however that for a QC code, or multilevel concatenations as
described in \cite[Section 2]{D}, symbols in the codewords of outer
codes are mapped to inner codes of the same length (length $m$ in
Theorem \ref{Jensen's thm}), whereas this is not the case for the
concatenated structure of GQC codes ($m_0,\ldots , m_{\ell -1}$ in
Theorem \ref{generalized Jensen}). Our goal is to adapt the
multilevel approach to GQC codes and obtain a minimum distance bound
as Jensen did for QC codes. We will first define multilevel
concatenation in a setting applicable to GQC codes. We continue with
the notation introduced so far.

Let $C$ be a $q$-ary GQC code of length $m_0+\cdots + m_{\ell -1}$
with the outer codes (or constituents) $C_1,\ldots ,C_s$. Recall
that $C_i\subset \E_{i,0}\times \cdots \times \E_{i,\ell -1}$ is an
$\E_i$-linear code of length $\ell$ for each $i$. Consider the
following set:

$$B:=\left\{\left(\begin{array}{ccc} c_{1,0} & \ldots & c_{1,\ell -1} \\ \vdots & \vdots & \vdots \\ c_{s,0} & \ldots & c_{s,\ell -1} \end{array} \right): \ (c_{i,0}, \ldots , c_{i,\ell -1})\in C_i \ \mbox{for $1\leq i \leq s$} \right\}.$$
We can view $B$ as a length $\ell$ code over a mixed alphabet
$(\E_{1,0}\times \cdots \times \E_{s,0}) \times \cdots \times
(\E_{1,\ell -1}\times \cdots \times \E_{s,\ell -1})$, which is
$\F_q$-linear with $|B|=\displaystyle{\prod_{i=1}^s |C_i|}$. We note
that $B$ will be the outer code in the multilevel concatenation
scheme.

For each $j=0,\ldots , \ell -1$, we use the maps $\psi_{i,j}$'s in
(\ref{isoms-2}) to define the following $\F_q$-linear isomorphisms:
\begin{equation}\label{con-1}
\begin{array}{lllll}
\psi_j & : & \E_{1,j}\times \cdots \times \E_{s,j} & \rightarrow & \langle I_{1,j}\rangle \oplus \cdots \oplus \langle I_{s,j}\rangle \subset R_j \\
& & (a_{1,j},\ldots ,a_{s,j}) & \mapsto & \psi_{1,j}(a_{1,j})+\cdots
+ \psi_{s,j}(a_{s,j})
\end{array}
\end{equation}
The multilevel concatenated code is defined as
\begin{equation}\label{con-2}
\psi(B):=\left\{ \Bigl(\psi_0\left(c_{1,0},\ldots
,c_{s,0}\right),\ldots , \psi_{\ell -1}\left(c_{1,\ell -1},\ldots
,c_{s,\ell -1}\Bigr)  \right):  \ \left(\begin{array}{ccc} c_{1,0} &
\ldots & c_{1,\ell -1} \\ \vdots & \vdots & \vdots \\ c_{s,0} &
\ldots & c_{s,\ell -1} \end{array} \right) \in B \right\}.
\end{equation}
Observe that the maps $\psi_0,\ldots ,\psi_{\ell -1}$ concatenate
each symbol in the codewords of $B$, which comes from mixed
cross-product alphabets as described above, to length $m_0,\ldots
,m_{\ell -1}$ words respectively. It is also clear that $\dim_{\F_q}
\psi(B)=\displaystyle{\sum_{i=1}^s \dim_{\F_q} C_i}=\dim_{\F_q} C$.

\begin{prop}\label{multi-view}
$$\psi(B)=\bigoplus_{i=1}^s \langle I_i \rangle \Box C_i .$$
\end{prop}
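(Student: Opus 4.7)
The claim is essentially an unwinding of definitions, and my plan is to exhibit the bijection between the two descriptions by matching them coordinate-by-coordinate.

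First, I would verify that the sum on the right is genuinely direct. Each $\langle I_i \rangle \Box C_i$ lies inside $\langle I_i \rangle$, since by Remark \ref{concat-defn} the concatenation is precisely $\Psi_i(C_i)$, and the image of $\Psi_i$ sits in $\langle I_{i,0}\rangle \times \cdots \times \langle I_{i,\ell-1}\rangle = \langle I_i\rangle$. So the directness is inherited from the decomposition $R'=\bigoplus_{i=1}^s \langle I_i\rangle$ established in Theorem \ref{mimicjensen}.

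Next, I would show set-theoretic equality by direct manipulation. Take a typical element of $\psi(B)$, which by (\ref{con-2}) and (\ref{con-1}) has the form
\[
\Bigl(\psi_0(c_{1,0},\ldots,c_{s,0}),\,\ldots,\,\psi_{\ell-1}(c_{1,\ell-1},\ldots,c_{s,\ell-1})\Bigr) = \Bigl(\sum_{i=1}^{s}\psi_{i,0}(c_{i,0}),\,\ldots,\,\sum_{i=1}^{s}\psi_{i,\ell-1}(c_{i,\ell-1})\Bigr),
\]
with $(c_{i,0},\ldots,c_{i,\ell-1})\in C_i$ for each $i$. Because addition in $R'$ is coordinate-wise, this equals
\[
\sum_{i=1}^{s} \bigl(\psi_{i,0}(c_{i,0}),\,\ldots,\,\psi_{i,\ell-1}(c_{i,\ell-1})\bigr) = \sum_{i=1}^{s} \Psi_i(c_{i,0},\ldots,c_{i,\ell-1}),
\]
which by Remark \ref{concat-defn} is an element of $\bigoplus_{i=1}^{s}\langle I_i\rangle\Box C_i$. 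The converse inclusion reverses exactly the same identity: every element of $\bigoplus_i \langle I_i\rangle\Box C_i$ is a sum $\sum_i \Psi_i(c_{i,0},\ldots,c_{i,\ell-1})$ with $(c_{i,0},\ldots,c_{i,\ell-1})\in C_i$, and its $j$-th coordinate collapses to $\psi_j(c_{1,j},\ldots,c_{s,j})$ by the definition of $\psi_j$.

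As a sanity check I would also compare $\F_q$-dimensions: the $\psi_{i,j}$ are $\F_q$-linear isomorphisms, so each $\langle I_i\rangle\Box C_i$ has the same $\F_q$-dimension as $C_i$, giving $\dim_{\F_q}\bigoplus_i \langle I_i\rangle\Box C_i = \sum_i \dim_{\F_q} C_i$; this agrees with the dimension count for $\psi(B)$ noted right before the proposition, so a single inclusion would suffice. The main obstacle is purely notational — keeping the indices $i$ (running over constituents) and $j$ (running over the $\ell$ blocks) cleanly separated and ensuring that the ``mixed alphabet'' structure of $B$ matches the componentwise sum produced by the $\psi_j$'s.
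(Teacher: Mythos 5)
Your proof is correct and follows essentially the same route as the paper: rewrite the coordinatewise sums defining $\psi(B)$ as $\sum_{i=1}^s \Psi_i(c_{i,0},\ldots,c_{i,\ell-1})$, identify each summand via Remark \ref{concat-defn}, and close the argument either by the dimension count or by reversing the identity. The only difference is that you also spell out the directness of the sum and the converse inclusion explicitly, which the paper leaves implicit behind the dimension comparison.
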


\begin{proof}
A codeword in $\psi(B)$ is of the form
$$\Bigl(\bigl(\psi_{1,0}(c_{1,0})+\cdots + \psi_{s,0}(c_{s,0})\bigr), \ldots , \bigl(\psi_{1,\ell -1}(c_{1,\ell -1})+\cdots + \psi_{s,\ell -1}(c_{s,\ell -1})\bigr) \Bigr),$$
which can be rewritten as
$$\bigl(\psi_{1,0}(c_{1,0}), \ldots , \psi_{1,\ell -1}(c_{1,\ell -1})\bigr) + \cdots + \bigl(\psi_{s,0}(c_{s,0}), \ldots , \psi_{s,\ell -1}(c_{s,\ell -1})\bigr).$$
This expression also belongs to $\displaystyle{\bigoplus_{i=1}^s}
\langle I_i \rangle \Box C_i$ (cf. Remark \ref{concat-defn}), hence
$\psi(B)\subseteq \displaystyle{\bigoplus_{i=1}^s} \langle I_i
\rangle \Box C_i$. The result follows since both codes have the same
$\F_q$-dimension.
\end{proof}

So, we obtained another way of presenting the GQC code $C$. The
advantage of this is that it makes it possible to prove the minimum
distance bound on GQC codes

\begin{thm}\label{dist bound GQC}
Let $C$ be a GQC code with nonzero constituents $C_{i_1},\ldots
,C_{i_g}$, where $\{i_1,\ldots ,i_g\}\subseteq \{1,\ldots ,s\}$. Let
$d_u$ denote the minimum distance of $C_{i_u}$, for each $1\leq u
\leq g$ and assume that $d_1\leq d_2\leq \cdots \leq d_g$. If we set
$$D_u:= \tiny{\displaystyle{\min_{\begin{array}{c} J\subset \{0,1,\ldots , \ell -1\} \\ |J|=d_u \end{array}}}} \left\{\sum_{t\in J} d\Bigl(\langle I_{i_1,t}\rangle \oplus \langle I_{i_2,t}\rangle \oplus \cdots \oplus \langle I_{i_u,t}\rangle \Bigr) \right\} $$
for $1\leq u \leq g$, then
$$d(C)\geq \min\{D_1,D_2,\ldots , D_g\}.$$
\end{thm}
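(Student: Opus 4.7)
The strategy is to port Jensen's bound for quasi--cyclic codes (\cite[Theorem 4]{J}) to the mixed-length setting by working with the multilevel concatenated picture of $C$ provided in Proposition~\ref{multi-view}. Fix an arbitrary nonzero codeword $c\in C=\psi(B)$ and pull it back to its outer matrix $(c_{i,j}) \in B$, so that its $i$th row $\lambda_i=(c_{i,0},\ldots ,c_{i,\ell -1})$ is a codeword of $C_i$ for each $1\le i\le s$. Because the constituents indexed outside $\{i_1,\ldots ,i_g\}$ are $\{0\}$, only the rows $\lambda_{i_1},\ldots ,\lambda_{i_g}$ can be nonzero, and at least one of them is, since $c\neq 0$.

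Next, let $v^*\in\{1,\ldots ,g\}$ be the largest index for which $\lambda_{i_{v^*}}\neq 0$. Since $\lambda_{i_{v^*}}\in C_{i_{v^*}}$, its support $J^*\subseteq\{0,\ldots ,\ell -1\}$ has size $|J^*|\ge d_{v^*}$. For each $t\in J^*$, the $t$th block of $c$ equals, by (\ref{con-1}),
$$\psi_t(c_{1,t},\ldots ,c_{s,t}) \;=\; \sum_{u=1}^{v^*}\psi_{i_u,t}(c_{i_u,t}) \;\in\; \langle I_{i_1,t}\rangle\oplus\cdots\oplus\langle I_{i_{v^*},t}\rangle,$$
where the terms with $u>v^*$ vanish by the maximality of $v^*$ and terms outside $\{i_1,\ldots ,i_g\}$ vanish because those constituents are zero. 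This block is nonzero, because $c_{i_{v^*},t}\ne 0$ forces $\E_{i_{v^*},t}=\E_{i_{v^*}}$, and $\psi_{i_{v^*},t}$ is then an $\F_q$-linear isomorphism (cf.\ (\ref{isoms-2})), so $\psi_{i_{v^*},t}(c_{i_{v^*},t})$ is a nonzero summand of the directness established in Theorem~\ref{mimicjensen}. Therefore the $t$th block has weight at least $d\bigl(\langle I_{i_1,t}\rangle\oplus\cdots\oplus\langle I_{i_{v^*},t}\rangle\bigr)$.

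Summing the block weights over $t\in J^*$ gives
$$\mathrm{wt}(c) \;\ge\; \sum_{t\in J^*} d\bigl(\langle I_{i_1,t}\rangle\oplus\cdots\oplus\langle I_{i_{v^*},t}\rangle\bigr).$$
Since the summands are nonnegative and $|J^*|\ge d_{v^*}$, restricting the sum to any size-$d_{v^*}$ subset $J\subseteq J^*$ and then taking the minimum over all such subsets produces $\mathrm{wt}(c)\ge D_{v^*}\ge \min\{D_1,\ldots ,D_g\}$. Minimising over nonzero $c\in C$ yields the claimed bound.

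The only delicate point is the middle paragraph: one must be sure that the $t$th block truly lies in an \emph{internal} direct sum inside $R_t$ and is nonzero there, so that its weight can be bounded below by the minimum distance of that direct sum code. Both facts rest on Theorem~\ref{mimicjensen} (which supplies the directness in each $R_j$) together with the $\F_q$-linear isomorphism~(\ref{isoms-2}); once these are in hand, the rest of the argument is a straightforward weight count mirroring Jensen's original proof.
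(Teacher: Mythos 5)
Your proof is correct and follows essentially the same route as the paper: pass to the outer matrix in $B$, identify the maximal nonzero row, bound each nonzero column's weight by the minimum distance of the internal direct sum $\langle I_{i_1,t}\rangle\oplus\cdots\oplus\langle I_{i_{v^*},t}\rangle$ inside $R_t$, and sum over at least $d_{v^*}$ columns. The only cosmetic difference is that the paper first treats codewords whose nonzero rows are exactly $i_1,\ldots,i_u$ and then reduces an arbitrary configuration to this case via a containment of direct sums, whereas you handle all configurations uniformly from the outset; both are valid.
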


\begin{proof}
Codewords in $B$ have $g$ rows coming from the constituents of $C$.
For any $u \in \{1,\ldots ,g\}$, consider a codeword $b\in B$ whose
first $u$ rows are nonzero codewords from the corresponding
constituents and the remaining rows are the zero codewords. Let us
denote the columns (symbols in the mixed alphabets) of $b$ by
$(b_0,\ldots ,b_{\ell-1})$. By assumption on the ordering of minimum
distances of the constituents, $b$ has at least $d_u$ nonzero
columns. By linearity of $\Psi$, a zero (nonzero) column in $b$ is
mapped to the zero (nonzero) codeword in the corresponding image.
Again due to linearity, zero entries in nonzero columns (e.g. the
last $g-u$ entry in each nonzero column) are also mapped to zeros in
the image. Therefore, if $0\leq t_1,\ldots , t_{d_u} \leq \ell -1$
denotes nonzero columns of $b$, then
$\Psi(b)=\bigl(\psi_0(b_0),\ldots , \psi_{\ell -1}(b_{\ell
-1})\bigr)$ lies in
$$\Bigl(\langle I_{i_1,t_1}\rangle \oplus \cdots \oplus \langle I_{i_u,t_1}\rangle \Bigr) \times \cdots \times  \Bigl(\langle I_{i_1,t_{d_u}}\rangle \oplus \cdots \oplus \langle I_{i_u,t_{d_u}}\rangle \Bigr) \  \mbox{(cf. (\ref{con-1}) and (\ref{con-2}))}.$$
Hence the weight of $\Psi(b)$ is at least
$$\sum_{k=1}^{d_u} d\Bigl(\langle I_{i_1,t_k}\rangle \oplus \cdots \oplus \langle I_{i_u,t_k}\rangle \Bigr).$$
If we consider all possible choices of $d_u$ nonzero columns for
$b\in B$ as above, codewords obtained this way in the image of
$\Psi$ (i.e. $C$) have weights greater than or equal to $D_u$.
Applying the same argument with each $u=1,\ldots ,g$, we see that
codewords of $C$ arising this way from $B$ have weights at least
$D:=\min\{D_1,D_2,\ldots , D_u\}$.

Now suppose $c=\Psi(b)$ is a codeword in $C$, where $b\in B$ has
different configuration of nonzero rows, $\mu_1 < \mu_2 < \cdots
<\mu_{e} \in \{1,\ldots , g\}$. Arguing as above, for some subset
$J$ of $\{0,1,\ldots , \ell -1\}$ of cardinality $|J|=d_{\mu_e}$,
the weight $w(c)$ of such $c$ is at least
$$\sum_{t\in J} d\Bigl(\langle I_{i_{\mu_1},t}\rangle \oplus \langle I_{i_{\mu_2},t}\rangle \oplus \cdots \oplus \langle I_{i_{\mu_e},t}\rangle \Bigr).$$
For each $t\in J$ we have
$$\Bigl( \langle I_{i_{\mu_1},t}\rangle \oplus \langle I_{i_{\mu_2},t}\rangle \oplus \cdots \oplus \langle I_{i_{\mu_e},t}\rangle \Bigr) \subset \Bigl( \langle I_{i_{1},t}\rangle \oplus \langle I_{i_{2},t}\rangle \oplus \cdots \oplus \langle I_{i_{\mu_e},t}\rangle\Bigr).$$
Hence $w(c)\geq D_{\mu_e} \geq D$. Therefore $D$ is a lower bound
for the weights of all codewords in $C$.
\end{proof}

\begin{rem}
Suppose $C$ is a QC code with nonzero constituents $C_{i_1},\ldots
,C_{i_g}$, whose minimum distances are ordered as in Theorem
\ref{dist bound GQC}. If $C$ is of length $m\ell$ and index $\ell$,
then $m_0=\cdots =m_{\ell -1}=m$ and $\langle I_{i_u,t}\rangle
=\langle \theta_{i_u}\rangle$ for any $t\in \{0,\ldots , \ell -1\}$
and any $u\in \{1,\ldots , g\}$. (cf. Section \ref{intro}). Then for
any $J\subset \{0,1,\ldots , \ell -1\}$ with $|J|=d(C_{i_u})$, we
have
\begin{eqnarray*}
\sum_{t\in J} d\Bigl(\langle I_{i_1,t}\rangle \oplus \langle I_{i_2,t}\rangle \oplus \cdots \oplus \langle I_{i_u,t}\rangle \Bigr) & = & \sum_{t\in J} d\Bigl(\langle \theta_{i_1}\rangle \oplus \langle \theta_{i_2}\rangle \oplus \cdots \oplus \langle \theta_{i_u}\rangle \Bigr) \\
& = & d(C_{i_u})d\Bigl(\langle \theta_{i_1}\rangle \oplus \langle
\theta_{i_2}\rangle \oplus \cdots \oplus \langle \theta_{i_u}\rangle
\Bigr).
\end{eqnarray*}
Hence the bound in Theorem \ref{dist bound GQC} takes the form
$$d(C)\geq \displaystyle{\min_{1\leq u \leq g}} \left\{d(C_{i_u})d\Bigl(\langle \theta_{i_1}\rangle \oplus \langle \theta_{i_2}\rangle \oplus \cdots \oplus \langle \theta_{i_u}\rangle \Bigr) \right\},$$
for a QC code $C$, which is exactly Jensen's bound (see
\cite[Theorem 4]{J}, \cite[Theorem 3.3]{GO}).
\end{rem}

\begin{rem}
Esmaeili and Yari also found a minimum distance bound for GQC codes
but their bound only applies to one-generator GQC codes
(\cite[Theorem 4]{EY}).
\end{rem}

\section{Self-Dual and LCD Cases} \label{sd-lcd section}

Let us write the factorization  of the polynomials $x^{m_j}-1$ (for
all $0\leq j \leq \ell -1$) into irreducible polynomials in
$\F_q[x]$ as follows, which is needed for dual code analysis (see
also \cite{LS}):
\begin{equation}\label{irreducibles-3}
x^{m_j}-1=g_{1}(x)^{v_{1,j}}\cdots
g_{r}(x)^{v_{r,j}}h_{1}(x)^{w_{1,j}}h_{1}^*(x)^{w_{1,j}}\cdots
h_{p}(x)^{w_{p,j}}h_{p}^*(x)^{w_{p,j}},
\end{equation}
Here, $g_{i}$'s are self-reciprocal, $h_{t}^*$ denotes the
reciprocal of $h_{t}$ and $v_{i,j}, w_{t,j} \in \{0,1\}$, for all
$i,j,t$.

Let $\G_{i}=\F_q[x]/\langle g_{i}(x) \rangle$,
$\HH_{t}'=\F_q[x]/\langle h_{t}(x) \rangle$ and
$\HH_{t}''=\F_q[x]/\langle h_{t}^*(x) \rangle$ for each $i$ and $t$.
Let us define (cf.(\ref{extensions}))
\begin{equation}\label{extensions-2}
\begin{array}{ccc}
  \G_{i,j}= \begin{cases}
   \ \G_i,   \text{ if } v_{i,j}=1, \\
   \{0\},   \text{ if } v_{i,j}=0.
  \end{cases} &  \HH_{t,j}'= \begin{cases}
   \ \HH_{t}',   \text{ if } w_{t,j}=1, \\
   \{0\},   \text{ if } w_{t,j}=0.
  \end{cases} & \HH_{t,j}''= \begin{cases}
   \ \HH_{t}'',   \text{ if } w_{t,j}=1, \\
   \{0\},   \text{ if } w_{t,j}=0.
  \end{cases}
\end{array}
\end{equation}
By Chinese Remainder Theorem, decomposition of $R_j$ in
(\ref{CRT-4}) now becomes:
\begin{equation} \label{CRT-8}
R_j \cong \left( \bigoplus_{i=1}^{r} \G_{i,j} \right) \oplus \left( \bigoplus_{t=1}^{p} \Bigl( \HH_{t,j}' \oplus \HH_{t,j}'' \Bigr) \right), \ \mbox{for $0\leq j \leq \ell-1$.}\\
\end{equation}
Hence, we get the following isomorphism for $R'= R_0 \times \cdots
\times R_{\ell-1}$:
\begin{equation} \label{CRT-9} \small
R'\cong \left(\bigoplus_{i=1}^{r} (\G_{i,0} \times \cdots \times
\G_{i,\ell-1})\right) \oplus \left(\bigoplus_{t=1}^{p} (\HH_{t,0}'
\times \cdots \times \HH_{t,\ell-1}')\right) \oplus
\left(\bigoplus_{t=1}^{p} (\HH_{t,0}'' \times \cdots \times
\HH_{t,\ell-1}'')\right),
\end{equation} which implies
\begin{equation*}
R'\subseteq \left(\bigoplus_{i=1}^{r} \G_i^{\ell}\right) \oplus
\left(\bigoplus_{t=1}^{p} (\HH'_{t})^{\ell}\right) \oplus
\left(\bigoplus_{t=1}^{p} (\HH''_{t})^{\ell}\right),
\end{equation*}
since for each $j$, $\G_{i,j} \subset \G_i$, $\HH'_{t,j} \subset
\HH'_{t}$ and $\HH''_{t,j} \subset \HH''_{t}$ by
(\ref{extensions-2}).

Hence, a GQC code $C\subset R'= R_0 \times \cdots \times R_{\ell-1}$
viewed as an $\Fq[x]$-submodule of $R'$ now decomposes as (cf.
(\ref{constituents-2}))
\begin{equation} \label{constituents-3}
C=\left( \bigoplus_{i=1}^{r} C_i \right) \oplus \left(
\bigoplus_{t=1}^{p} \Bigl( C_{t}' \oplus C_{t}'' \Bigr) \right).
\end{equation}
where $C_i$'s are the $\G_i$-linear constituents of $C$ of length
$\ell$, for all $i=1,\ldots,r$, $C'_{t}$'s and $C''_{t}$'s are the
$\HH'_t$-linear and $\HH''_t$-linear constituents of $C$ of length
$\ell$, respectively, for all $t=1,\ldots,p$. By fixing roots
corresponding to irreducible factors $g_i,h_j,h_j^*$ and via Chinese
Remainder Theorem (cf. Section 2), one can write explicitly the
constituents as in (\ref{explicit constituents-2}) in this setting
as well.

It is clear that the cardinality of each $\G_i$, say $q_i$, is an
even power of $q$. Each $\G_i^{\ell}$ is equipped with the Hermitian
inner product, which is defined for $\vec{c}=(c_{i,0},\ldots
,c_{i,\ell-1}), \vec{d}=(d_{i,0},\ldots ,d_{i,\ell-1})\in
\G_i^{\ell}$ as
\begin{equation}\label{hermprod}
\langle \vec{c},\vec{d}\rangle := \sum_{j=0}^{\ell-1} c_{i,j}
d_{i,j}^{\sqrt{q_i}}.
\end{equation}
For $1\leq t \leq p$, $\HH_t'^{\ell}$ and $\HH_t''^{\ell}$ are
equipped with the usual Euclidean inner product.

The dual of a GQC code is also GQC. The proof of the following
result will be omitted, since it follows the same lines of the
analogous result given for QC codes in \cite{LS2}.
\begin{prop}\label{duality}
Let $C$ be a GQC code with CRT decomposition as in
(\ref{constituents-3}). Then its dual code $C^{\bot}$ is of the form
\begin{equation}\label{dual}
C^\bot=\left( \bigoplus_{i=1}^r C_i^{\bot_h} \right) \oplus \left(
\bigoplus_{t=1}^p \Bigl( C_t''^{\bot_e} \oplus C_t'^{\bot_e} \Bigr)
\right),
\end{equation}
where  $\bot_h$ denotes the Hermitian dual on $G_i^\ell$ (for all
$1\leq i \leq s$) and  $\bot_e$ denotes the Euclidian dual on
$\HH_i'^{\ell}=\HH_i''^{\ell}$ (for all $1\leq i \leq t$).
\end{prop}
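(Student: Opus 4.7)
The plan is to extend the QC duality proof from \cite{LS2} to the mixed-length setting of $R'=R_0\times\cdots\times R_{\ell-1}$, using the CRT decomposition \eqref{CRT-9}. First I would translate the Euclidean pairing on $\F_q^{m_0+\cdots+m_{\ell-1}}$ into polynomial form: for $\vec a=(a^0(x),\ldots,a^{\ell-1}(x))$ and $\vec b=(b^0(x),\ldots,b^{\ell-1}(x))$ in $R'$, the inner product of the corresponding coefficient vectors equals the constant coefficient of $\sum_{j=0}^{\ell-1} a^j(x)\,b^j(x^{-1})$, each summand reduced modulo $x^{m_j}-1$. Thus taking the dual reduces to combining, block by block, the involution $x\mapsto x^{-1}$ with the standard pairing, and the remaining task is to analyze this involution under the CRT isomorphism \eqref{CRT-8}.

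In the $j$-th block, fix a root $\alpha$ of an irreducible factor $f$ of $x^{m_j}-1$; the substitution $x\mapsto x^{-1}$ sends the evaluation $a^j(x)\mapsto a^j(\alpha)$ to $a^j(x)\mapsto a^j(\alpha^{-1})$. Two cases occur. If $f=g_i$ is self-reciprocal, then $\alpha^{-1}$ is again a root of $g_i$, and the standard identity $\alpha^{-1}=\alpha^{\sqrt{q_i}}$ (where $q_i=|\G_i|$, obtained from the fact that $\G_i/\F_q$ has even degree and $x\mapsto x^{-1}$ is the unique order-two Galois automorphism on the roots) identifies the induced pairing on $\G_{i,0}\times\cdots\times\G_{i,\ell-1}$ with the Hermitian form \eqref{hermprod}; hence the $g_i$-component of $C^{\bot}$ must equal $C_i^{\bot_h}$. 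If instead $f=h_t$ (respectively $h_t^*$), then $\alpha^{-1}$ is a root of the partner $h_t^*$ (respectively $h_t$), so the involution swaps the two CRT summands of the reciprocal pair and the induced pairing between them is the ordinary Euclidean pairing $\HH_t'^{\ell}\times\HH_t''^{\ell}\to\F_q$. This forces the $h_t^*$-component of $C^{\bot}$ to be $C_t'^{\bot_e}$ and the $h_t$-component to be $C_t''^{\bot_e}$, which is exactly the swap displayed in \eqref{dual}.

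Finally, one inclusion in \eqref{dual} is immediate from the identifications above, and equality follows by a dimension count: block by block in \eqref{CRT-9}, the dimension of each constituent plus that of its Hermitian or Euclidean dual fills the corresponding summand, and summing over all active $(i,j)$ and $(t,j)$ recovers $m_0+\cdots+m_{\ell-1}-\dim_{\F_q}C$, which matches $\dim_{\F_q} C^{\bot}$. The main technical obstacle I expect is the careful bookkeeping of the zero components $\G_{i,j}=\{0\}$ and $\HH_{t,j}'=\{0\}$ forced by the possibly unequal block lengths $m_j$: one must verify that in those positions the pairing contributes nothing and that every identification above remains block-diagonal so that orthogonality and the dimension accounting go through summand by summand. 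Once this is handled, the remainder of the argument is essentially identical to the QC case in \cite{LS2}, which is why the authors omit the details.
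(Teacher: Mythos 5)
The paper actually omits this proof entirely, saying only that it ``follows the same lines of the analogous result given for QC codes in \cite{LS2}.'' Your argument is precisely that: you reformulate the Euclidean pairing on $\F_q^{m_0+\cdots+m_{\ell-1}}$ block-by-block as the constant coefficient of $\sum_j a^j(x)b^j(x^{-1})$ reduced modulo $x^{m_j}-1$, push the involution $x\mapsto x^{-1}$ through the CRT splitting \eqref{CRT-8}, observe that it fixes the self-reciprocal summands (inducing the Hermitian form via $\alpha^{-1}=\alpha^{\sqrt{q_i}}$) and swaps the reciprocal-pair summands (inducing the Euclidean pairing between $\HH'_t$ and $\HH''_t$), and then close with a dimension count. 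This is exactly the Ling--Sol\'e route for QC codes adapted to the mixed co-indices $m_0,\ldots,m_{\ell-1}$, and your explicit flagging of the bookkeeping for the vanishing components $\G_{i,j}=\{0\}$, $\HH'_{t,j}=\{0\}$ is the only genuinely new wrinkle the GQC setting introduces; you correctly identify it and indicate how to handle it. One small caution, which the paper shares: when a self-reciprocal factor $g_i$ has degree one (i.e.\ $g_i=x\pm1$), the field $\G_i=\F_q$ is not of even degree over $\F_q$ and the identity $\alpha^{-1}=\alpha^{\sqrt{q_i}}$ should be read as the trivial involution, so the ``Hermitian'' dual degenerates to the Euclidean one; this does not affect the conclusion but your phrase ``$\G_i/\F_q$ has even degree'' is strictly speaking false in that corner case. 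Apart from that, your filled-in proof matches the approach the authors intend.
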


Recall that a linear code $C$ is said to be self dual, if $C=C^\bot$
and $C$ is called linear complementary dual (LCD) if $C \cap C^\bot
= \{0\}$. Let us now characterize self-dual and LCD GQC codes via
their constituents (see \cite{GOS,SLS}).

\begin{thm} \label{SD-CDcriteria}
Let $C$ be a $q$-ary GQC code of length $m_0+\cdots + m_{\ell-1}$,
whose CRT decomposition is as in (\ref{constituents-3}).
\begin{enumerate}
\item $C$ is self-dual if and only if $C_i$ is Hermitian self-dual over $\G_i$, for all $1\leq i \leq r$, and $C_t'' = C_t'^{\bot_e}$ over $\HH_t'=\HH_t''$, for all $1\leq t \leq p$.
\item $C$ is LCD if and only if $C_i$ is Hermitian LCD over $\G_i$, for all $1\leq i \leq r$, and $C_t' \cap C_t''^{\bot_e}=\{0\}$, $C_t'' \cap C_t'^{\bot_e}=\{0\}$ over $\HH_t'=\HH_t''$, for all $1\leq t \leq p$.
\end{enumerate}
\end{thm}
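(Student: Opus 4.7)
The plan is to deduce both parts as direct corollaries of Proposition~\ref{duality}, using the fact that the CRT decomposition in (\ref{CRT-9}) presents $R'$ as an internal direct sum of independent ambient spaces, one per constituent slot. Concretely, I will identify each constituent of $C$ with the corresponding summand of $R'$ under the CRT isomorphism, and then read off equality and trivial-intersection conditions coordinate by coordinate.

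For part (1), I would compare the decomposition (\ref{constituents-3}) of $C$ with the decomposition (\ref{dual}) of $C^{\bot}$ from Proposition~\ref{duality}. Since the ambient summands $\G_i^{\ell}$, $\HH_t'^{\ell}$, $\HH_t''^{\ell}$ are linearly independent in $R'$, the equality $C=C^{\bot}$ holds if and only if the projections onto each summand coincide; that is, $C_i=C_i^{\bot_h}$ over $\G_i$ for every $i$, and simultaneously $C_t'=C_t''^{\bot_e}$ and $C_t''=C_t'^{\bot_e}$ over $\HH_t'=\HH_t''$ for every $t$. Taking Euclidean duals shows that these last two conditions are equivalent, so only one needs to be listed in the statement.

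For part (2), the same independence of the CRT summands yields that intersections distribute over the direct sum: if $D=\bigoplus_k D_k$ and $E=\bigoplus_k E_k$ with $D_k,E_k$ in a common independent summand $V_k$, then any $v\in D\cap E$ decomposes uniquely as $v=\sum v_k$ with $v_k\in D_k\cap E_k$, so $D\cap E=\bigoplus_k(D_k\cap E_k)$. Applying this to $C$ and $C^{\bot}$ with the decompositions (\ref{constituents-3}) and (\ref{dual}) reduces $C\cap C^{\bot}=\{0\}$ to the vanishing of $C_i\cap C_i^{\bot_h}$ for each $i$ (Hermitian LCD of $C_i$) together with $C_t'\cap C_t''^{\bot_e}=\{0\}$ and $C_t''\cap C_t'^{\bot_e}=\{0\}$ for each $t$. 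In contrast to the self-dual case, these last two conditions are genuinely independent of each other, since $C_t'$ and $C_t''$ are unrelated constituents living in isomorphic but distinct summands.

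The only non-formal step is justifying the compatibility between the CRT direct-sum structure and the inner products on $R'$, but this is exactly the content of Proposition~\ref{duality}; once that is in hand, the argument is purely formal linear algebra over the CRT decomposition, so I do not anticipate any serious obstacle.
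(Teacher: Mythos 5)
Your proposal is correct and follows exactly the route the paper takes: the paper's proof consists of the single line that the result is ``immediate'' from the CRT decompositions (\ref{constituents-3}) and (\ref{dual}), and your write-up simply spells out the coordinate-by-coordinate comparison and the distributivity of intersections over independent CRT summands that make that immediacy precise.
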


\begin{proof}
Immediate from the CRT decompositions of $C$ in
(\ref{constituents-3}) and of its dual $C^{\bot}$ in (\ref{dual}).
\end{proof}
The following special cases are easy to derive from Theorem
\ref{SD-CDcriteria} above.
\begin{cor} \label{CDinstance}
\begin{enumerate}
\item If the CRT decomposition of $C$ is as in (\ref{constituents-3}) with Hermitian self-dual codes $C_i$ over $\G_i$, for all $1\leq i \leq r$, and $C_t'=C_t''=\{0\}$ over $\HH_t'=\HH_t''$, for all $1\leq t \leq p$, then $C$ is self-dual.
\item If the CRT decomposition of $C$ is as in (\ref{constituents-3}) with Euclidean LCD codes $C_t'=C_t''$ over $\HH_t'=\HH_t''$, for all $1\leq t \leq p$ and Hermitian LCD codes $C_i$ over $\G_i$, for all $1\leq i \leq r$, then $C$ is LCD.
\end{enumerate}
\end{cor}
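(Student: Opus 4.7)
The plan is to derive both parts of the corollary as direct specializations of Theorem \ref{SD-CDcriteria}, verifying the corresponding hypotheses on the $\G_i$-constituents (Hermitian side) and the $\HH_t', \HH_t''$-constituents (Euclidean side). No independent argument is needed: the content is simply a convenient repackaging of the general criterion under stronger assumptions on the constituents.

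For part (1), I would invoke Theorem \ref{SD-CDcriteria}(1). Its first hypothesis --- that each $C_i$ be Hermitian self-dual over $\G_i$ --- is given. Its second hypothesis, $C_t'' = C_t'^{\bot_e}$ over $\HH_t' = \HH_t''$, must be derived from the assumption $C_t' = C_t'' = \{0\}$. This is consistent precisely when the ambient spaces associated to $h_t$ and $h_t^*$ are trivial (i.e., when neither $h_t$ nor $h_t^*$ divides any $x^{m_j}-1$, so that these factors contribute trivially to the CRT decomposition in (\ref{CRT-9})). Under this reading, the vacuous equality $\{0\} = \{0\}^{\bot_e}$ holds, both hypotheses of Theorem \ref{SD-CDcriteria}(1) are met, and $C$ is self-dual.

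For part (2), I would apply Theorem \ref{SD-CDcriteria}(2). Its Hermitian condition (each $C_i$ Hermitian LCD over $\G_i$) is immediate from the hypothesis. For the Euclidean side, the assumption that $C_t' = C_t''$ is Euclidean LCD over $\HH_t' = \HH_t''$ gives $C_t' \cap C_t'^{\bot_e} = \{0\}$. Substituting $C_t'' = C_t'$ into the two intersection conditions of the theorem yields $C_t' \cap C_t''^{\bot_e} = C_t' \cap C_t'^{\bot_e} = \{0\}$ and, symmetrically, $C_t'' \cap C_t'^{\bot_e} = C_t' \cap C_t'^{\bot_e} = \{0\}$. Both hypotheses of Theorem \ref{SD-CDcriteria}(2) are verified, so $C$ is LCD.

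The only subtle point --- the one I expect a careful reader to pause on --- is the interpretation of $C_t' = C_t'' = \{0\}$ in part (1): since the Euclidean dual of the zero code is normally the full ambient space, this hypothesis yields self-duality only when the ambient spaces $\HH_t'^{\ell}$ and $\HH_t''^{\ell}$ are themselves trivial. Apart from making this interpretation explicit, the proof is a one-line invocation of Theorem \ref{SD-CDcriteria}.
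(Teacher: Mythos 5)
Your approach matches the paper's: both parts are read off directly from Theorem \ref{SD-CDcriteria}, and the paper itself offers no more than ``easy to derive.'' Part (2) is handled correctly --- substituting $C_t'=C_t''$ collapses the two intersection conditions of Theorem \ref{SD-CDcriteria}(2) into a single Euclidean LCD condition, which is exactly the hypothesis.

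Your flag on part (1) is a genuine and correct observation about the statement, not a flaw in your own argument. Theorem \ref{SD-CDcriteria}(1) requires $C_t'' = C_t'^{\bot_e}$, and the Euclidean annihilator of the zero code is the entire ambient space. So if, for some $t$ and some $j$, the factor $h_t$ actually divides $x^{m_j}-1$ (so $\HH_{t,j}'\ne\{0\}$), the hypothesis $C_t'=C_t''=\{0\}$ forces $C_t'^{\bot_e}$ to be the nontrivial ambient space, the criterion $C_t''=C_t'^{\bot_e}$ fails, and $C$ cannot be self-dual. Indeed, one can see this directly from Proposition \ref{duality}: the dual $C^\bot$ would then have a nonzero $h$-type constituent while $C$ has none. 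Hence part (1) as written is correct only under the implicit assumption that the $h$-type ambient spaces are trivial --- equivalently, that no reciprocal pair $h_t,h_t^*$ divides any $x^{m_j}-1$, i.e.\ effectively $p=0$ in (\ref{irreducibles-3}). Your reading and your proof under that reading are correct; the corollary's statement would benefit from saying this explicitly.
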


\section{Asymptotics} \label{asymptotics section}
The existence of the asymptotically good self-dual GQC codes is
shown in \cite{SLS}. In this section, we will analyze the asymptotic
performance of the complementary dual GQC codes, which are
constructed by using asymptotically good QC complementary dual
(QCCD) codes (see \cite{GOS}). We need the following results.

\begin{lem} \cite[Proposition 10]{CG}\label{LCD lemma}
Suppose that $C_1$ and $C_2$ are two $q$-ary LCD codes with
parameters $[n_1,k_1,d_1]$ and $[n_2,k_2,d_2]$ respectively. Let
$E=[C_1|C_2]:=\{[u|v] : u\in C_1, v \in C_2\}.$ Then $E$ satisfies
the following:
\begin{itemize}
\item[(i)] $E \subseteq \F_q^{n_1+n_2}$ is an LCD code.
\item[(ii)] $\dim E=k_1+k_2$.
\item[(iii)] $d(E)=\min\{d_1, d_2\}.$
\end{itemize}
\end{lem}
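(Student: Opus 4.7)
The plan is to dispatch the three parts in the order (ii), (iii), (i), since the dimension and distance claims are essentially combinatorial and only (i) uses the LCD hypothesis in a nontrivial way.

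For (ii), I would observe that the concatenation map
\[
\Theta : C_1\times C_2 \longrightarrow E, \qquad (u,v)\longmapsto [u\mid v],
\]
is manifestly $\F_q$-linear and surjective by the definition of $E$, and is injective because the $n_1$ and $n_2$ coordinate blocks are disjoint, so $[u\mid v]=0$ forces $u=0$ and $v=0$. Hence $\dim_{\F_q} E = \dim_{\F_q}C_1+\dim_{\F_q}C_2 = k_1+k_2$.

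For (iii), since the supports of $[u\mid 0]$ and $[0\mid v]$ are disjoint, $w([u\mid v]) = w(u)+w(v)$ for any $[u\mid v]\in E$. If $[u\mid v]\neq 0$ then at least one of $u,v$ is nonzero; in either single-nonzero case the weight is at least $d_1$ or $d_2$, and in the both-nonzero case it is at least $d_1+d_2$. Choosing a minimum-weight codeword in whichever of $C_1,C_2$ has the smaller $d_i$ and pairing it with the zero vector from the other component realizes the bound, so $d(E)=\min\{d_1,d_2\}$.

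For (i), the key step is to identify $E^{\bot}$. Writing the Euclidean inner product on $\F_q^{n_1+n_2}$ as
\[
\langle [u'\mid v'], [u\mid v] \rangle = \langle u',u\rangle_{1} + \langle v',v\rangle_{2},
\]
and specializing first to $v=0$ (letting $u$ range over $C_1$) and then to $u=0$ (letting $v$ range over $C_2$), one sees that $[u'\mid v']\in E^{\bot}$ iff $u'\in C_1^{\bot}$ and $v'\in C_2^{\bot}$. Thus $E^{\bot}=[C_1^{\bot}\mid C_2^{\bot}]$. A vector $[u\mid v]$ in $E\cap E^{\bot}$ therefore satisfies $u\in C_1\cap C_1^{\bot}=\{0\}$ and $v\in C_2\cap C_2^{\bot}=\{0\}$ by the LCD hypothesis on $C_1$ and $C_2$, so $E\cap E^{\bot}=\{0\}$ and $E$ is LCD.

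The only step that requires any real thought is the identification $E^{\bot}=[C_1^{\bot}\mid C_2^{\bot}]$; the rest is a disjoint-support bookkeeping exercise. I do not expect any genuine obstacle.
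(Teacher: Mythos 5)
Your proof is correct. The paper itself gives no proof of this lemma, citing it directly as Proposition 10 of Carlet and Guilley, so there is no internal argument to compare against; your derivation is the natural one, and the key step $E^{\bot}=[C_1^{\bot}\mid C_2^{\bot}]$ (which reduces LCDness of $E$ to that of $C_1$ and $C_2$) is exactly what the cited result rests on.
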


Note that for more than two codes over the same alphabet, say
$C_1,\ldots ,C_a$, one can similarly define $[C_1|\cdots |C_a]$ and
the parameters are also determined similarly. Moreover, it is also
clear that if each $C_i$ is LCD, then the same also holds for
$[C_1|\cdots |C_a]$.

\begin{thm}\cite[Corollary 3.8]{GOS}\label{asymptotic-QCCD}
For any pair $q$ and $m$, which are relatively prime, there exists
an asymptotically good sequence of QCCD codes over $\F_q$ where each
QC code in the sequence has index length/$m$.
\end{thm}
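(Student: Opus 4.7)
The plan is to construct the desired sequence explicitly from asymptotically good constituent codes, leveraging the characterization of LCD QC codes via the CRT decomposition. First I would fix $m$ coprime to $q$ and factor
\[x^m-1=\prod_{i=1}^{r}g_i(x)\cdot\prod_{t=1}^{p}h_t(x)h_t^{*}(x),\]
where the $g_i$ are self-reciprocal and the $(h_t,h_t^{*})$ are reciprocal pairs, as in Section~\ref{sd-lcd section}. This pins down once and for all the list of extension fields $\G_i$ and $\HH_t'\cong\HH_t''$ into which any QC code of length $m\ell$ and co-index $m$ decomposes, regardless of $\ell$.

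For each $\ell$, I would build a QC code $C_\ell$ of length $m\ell$ by prescribing its constituents, as allowed by the QC analogue of Theorem \ref{SD-CDcriteria}. Each $\G_i$-constituent is chosen to be a Hermitian LCD code of length $\ell$ over $\G_i$, while for each reciprocal pair I would simply set $C_t'=C_t''$ to be a Euclidean LCD code of length $\ell$ over $\HH_t'=\HH_t''$; the condition $C_t'\cap (C_t'')^{\bot_e}=\{0\}$ then reduces to the Euclidean LCD condition on $C_t'$. By Theorem \ref{SD-CDcriteria}, $C_\ell$ is LCD.

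The asymptotic parameters would be extracted from two ingredients. First, a Gilbert--Varshamov style argument guarantees an asymptotically good sequence of Hermitian LCD (respectively Euclidean LCD) codes over any fixed finite field: a uniformly random linear code of a prescribed rate is LCD with probability bounded away from zero, while random linear codes meet the GV bound with high probability. Drawing each constituent from such a sequence produces fixed positive lower bounds on both its rate and its relative minimum distance. Second, the rate of $C_\ell$ is a weighted average of the $\F_q$-rates of its constituents, with fixed weights $(\dim_{\F_q}\G_i)/m$ and $2(\dim_{\F_q}\HH_t')/m$, hence bounded below by a positive constant independent of $\ell$. For the minimum distance, Jensen's bound (Theorem \ref{dist bound GQC} specialized to the QC setting) gives
\[d(C_\ell)\geq \min_{1\leq u\leq g}\Bigl\{d(C_{i_u})\cdot d\bigl(\langle\theta_{i_1}\rangle\oplus\cdots\oplus\langle\theta_{i_u}\rangle\bigr)\Bigr\},\]
where the inner factor is a positive constant depending only on $m$ and $q$ (since the minimal cyclic codes of length $m$ and their direct sums are fixed), while $d(C_{i_u})$ grows linearly in $\ell$ by construction. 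Dividing by $m\ell$ yields a relative distance bounded below by a positive constant.

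The main obstacle is the first ingredient: proving that asymptotically good Hermitian and Euclidean LCD codes meeting a Gilbert--Varshamov-type bound exist over a given finite field. This is the technical heart of \cite{GOS} and requires a careful analysis of the probability that $GG^{T}$ (respectively $GG^{*T}$) is invertible for a random generator matrix $G$, combined with the standard random-coding estimate. Once this input is accepted, the constituent-level construction together with Jensen's bound delivers the desired sequence of QCCD codes of arbitrary length and fixed co-index $m$.
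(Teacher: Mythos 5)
The paper does not prove Theorem \ref{asymptotic-QCCD}; it is quoted verbatim from \cite[Corollary 3.8]{GOS} and used as a black box in Section~\ref{asymptotics section}. So there is no in-paper proof to compare against, only the cited one. That said, your reconstruction is worth assessing on its own terms.

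Your outline is sound: fix the CRT decomposition of $x^m-1$, prescribe LCD constituents, invoke the QC analogue of Theorem~\ref{SD-CDcriteria} to get a QCCD code, and control parameters via Jensen's bound. The parameter bookkeeping is correct: the $\F_q$-rate of $C_\ell$ is the convex combination $\sum_i([\E_i:\F_q]/m)\cdot(\text{rate of }C_i)$, and since Jensen's bound always satisfies $d(C)\geq d_1$ (every term $d_u\cdot e_u$ dominates $d_1$ because $e_u\geq 1$ and $d_u\geq d_1$), a uniform lower bound $\delta$ on constituent relative distances yields $d(C_\ell)/(m\ell)\geq\delta/m$.

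However, you take a heavier route than is needed, and the extra weight is exactly where you flag the obstacle. By prescribing \emph{every} constituent to be a nonzero asymptotically good LCD code, you need a Gilbert--Varshamov-type density statement for \emph{Hermitian} LCD codes over each extension field $\G_i$ with $[\G_i:\F_q]\geq 2$, in addition to the Euclidean case over the $\HH_t'$. That Hermitian input is genuinely stronger and not as classical as the Euclidean one. A cleaner construction --- and, I believe, the one underlying \cite[Corollary 3.8]{GOS} --- sets all constituents to $\{0\}$ except the one attached to the always-present self-reciprocal factor $x-1$, for which $\G_1\cong\F_q$ and the Hermitian inner product collapses to the Euclidean one. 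Then $C_\ell=\langle\theta_1\rangle\Box C_1$ is a concatenation of the length-$m$ repetition code with a good Euclidean LCD code $C_1$ of length $\ell$ over $\F_q$; its parameters are $[m\ell,\,k_1,\,m\,d_1]$ with no need for Jensen's bound, the LCD criterion is trivially met on the zero constituents, and the only input required is the (well-established) density of Euclidean LCD codes among random $\F_q$-linear codes. Your version is not wrong, but it demands more than the theorem costs; you should either cite the Hermitian LCD density result explicitly over each $\G_i$, or simplify to the single-constituent construction and only invoke the Euclidean case.
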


The construction in Lemma \ref{LCD lemma} with QC component codes
yields a GQC code.

\begin{lem} \label{LCD lemma-2}
Suppose that $C_i$ is a QC code of length $m_i\ell_i$ and index
$\ell_i$ for each $1\leq i \leq a$, where $m_i$'s are pairwise
distinct. Then $[C_1|\cdots |C_a]$ is a GQC code of block lengths
$(\underbrace{m_1,\ldots ,m_1}_{\ell_1},\ldots ,
\underbrace{m_a,\ldots , m_a}_{\ell_a})$.
\end{lem}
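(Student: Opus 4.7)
The plan is essentially a definition-chase: the statement says that an ambient-space identification combined with the $\F_q[x]$-module structure of each component yields a GQC code.

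First, I would set up notation. Write $R_i := \F_q[x]/\langle x^{m_i}-1 \rangle$, and identify, in the usual way (as in the isomorphism $\phi$ of Section 2), each QC code $C_i$ of length $m_i\ell_i$ and index $\ell_i$ with an $R_i$-submodule of $R_i^{\ell_i}$. Correspondingly, the ambient space for a GQC code with the stated block lengths
$$(\underbrace{m_1,\ldots,m_1}_{\ell_1},\ldots,\underbrace{m_a,\ldots,m_a}_{\ell_a})$$
is exactly
$$R' \;=\; \underbrace{R_1\times\cdots\times R_1}_{\ell_1}\times\cdots\times\underbrace{R_a\times\cdots\times R_a}_{\ell_a} \;\simeq\; R_1^{\ell_1}\times\cdots\times R_a^{\ell_a}.$$
Under the isomorphism of $\F_q^{m_1\ell_1+\cdots+m_a\ell_a}$ with $R'$ obtained by grouping coordinates into blocks and applying the polynomial map (\ref{associate-1}) on each block, the concatenation $[C_1|\cdots|C_a]$ goes exactly to the product $C_1\times\cdots\times C_a\subset R'$.

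Next, I would verify the two things needed to conclude that $C_1\times\cdots\times C_a$ is an $\F_q[x]$-submodule of $R'$ (i.e.\ a GQC code). $\F_q$-linearity is clear from the $\F_q$-linearity of each $C_i$. For closure under the action of $x$, recall that the action of $x$ on $R'$ is coordinatewise multiplication by $x$ modulo $x^{m_j}-1$. When restricted to the block of $\ell_i$ coordinates indexed by the $i$th QC summand, this is precisely the coordinatewise multiplication by $x$ in $R_i^{\ell_i}$, which (by definition of a QC code as an $R_i$-submodule of $R_i^{\ell_i}$) preserves $C_i$. Therefore the action of $x$ preserves the product $C_1\times\cdots\times C_a$, as required.

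There is no real obstacle; the only thing to keep clean is the bookkeeping of two different but compatible actions, namely the action of $x\in R_i$ on the $i$th block (which preserves $C_i$ by the QC hypothesis) and the action of the formal variable $x\in\F_q[x]$ on the whole of $R'$ (which is what defines a GQC code). Since these two actions agree block by block after reduction modulo $x^{m_j}-1$, closure follows. A short concluding remark can note that the length of $[C_1|\cdots|C_a]$ is $m_1\ell_1+\cdots+m_a\ell_a$, matching the sum of the entries in the stated block-length tuple, so the identification is consistent, and the proof is complete.
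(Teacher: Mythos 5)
Your proposal is correct and follows essentially the same route as the paper's own proof: identify the ambient space with $R_1^{\ell_1}\times\cdots\times R_a^{\ell_a}$, note that $[C_1|\cdots|C_a]$ corresponds to $C_1\times\cdots\times C_a$, and observe that closure under multiplication by $x$ holds block by block because each $C_i$ is a QC code. Your write-up is more explicit about the bookkeeping, but the argument is the same.
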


\begin{proof}
Let $R_i=\F_q[x]/\langle x^{m_i}-1\rangle$ for each $i$ and recall
that $C_i$ is an $R_i$-submodule in $R_i^{\ell_i}$. In polynomial
representation, codewords are of the form $[u_1(x)|\cdots |u_a(x)]$,
where
$$u_i(x)=(u_{i,0}(x),\ldots , u_{i,\ell_i-1}(x)) \in C_i \subset R_i^{\ell_i}.$$
It is enough to show that $[C_1|\cdots |C_a]$ is closed under
multiplication by $x$ in $R_1^{\ell_1} \times \cdots \times
R_a^{\ell_a}$, which is true since each $C_i$ is a QC code and hence
closed under multiplication by $x$. The claim about the block
lengths of the resulting GQC code is also clear.
\end{proof}

Let $m_1$ and $m_2$ be distinct positive integers coprime to $q$.
Take two asymptotically good sequences of QCCD codes over $\F_q$,
say $(C_i)_{i \geq 1}$ and $(D_i)_{i \geq 1}$, with parameters
$[m_1\ell_i, k_i, d_i]$ and $[m_2\ell_i, k'_i, d'_i]$ for members of these sequences, respectively. We have\\
\begin{minipage}[r]{0.45\textwidth}
\begin{eqnarray}
R_{C_i} &=& \lim_{i \to \infty}
\dfrac{k_i}{m_1\ell_i} > 0, \nonumber\\
\delta_{C_i} &=& \lim_{i \to \infty} \dfrac{d_i}{m_1\ell_i} > 0,
\end{eqnarray}
\end{minipage} and
\begin{minipage}[l]{0.45\textwidth}
\begin{eqnarray*}
R_{D_i} &=& \lim_{i \to \infty}
\dfrac{k'_i}{m_2\ell_i} > 0, \\
\delta_{D_i} &=& \lim_{i \to \infty} \dfrac{d'_i}{m_2\ell_i} > 0.
\end{eqnarray*}
\end{minipage}\vspace{0.4cm}

Note that the existence of such sequences is guaranteed by Theorem
\ref{asymptotic-QCCD}. For each $i \geq 1$, set $ E_i = [C_i|D_i]$.
Then by Lemmas \ref{LCD lemma} and \ref{LCD lemma-2}, $E_i$ is a
GQCCD code of length $(m_1+m_2)\ell_i$, and block lengths
$(\underbrace{m_1,\ldots ,m_1}_{\ell_i}, \underbrace{m_2,\ldots ,
m_2}_{\ell_i})$. We have
\begin{eqnarray}
R_{E_i} &=& \lim_{i \to \infty}
\dfrac{k_i + k'_i}{(m_1+m_2)\ell_i} = \dfrac{m_1}{(m_1+m_2)}R_{C_i}+\dfrac{m_2}{(m_1+m_2)}R_{D_i} > 0, \nonumber\\
\delta_{E_i} &=& \lim_{i \to \infty} \dfrac{\min
\{d_i,d'_i\}}{(m_1+m_2)\ell_i} =
\min\left\{\dfrac{m_1}{(m_1+m_2)}\delta_{C_i},\dfrac{m_2}{(m_1+m_2)}\delta_{D_i}\right\}
 > 0.
\end{eqnarray}
Hence, we obtain a GQCCD code sequence $(E_i)_{i\geq 1}$ over $\F_q$
where each member of the sequence has block lengths
$(\underbrace{m_1,\ldots ,m_1}_{\ell_i}, \underbrace{m_2,\ldots ,
m_2}_{\ell_i})$. This argument can be generalized from two component
codes to many and the following can be similarly obtained.

\begin{thm} \label{GQCCD-asymp good}
Let $q$ be a prime power and assume that $m_1,m_2,\dots,m_a$ are
pairwise distinct positive integers relatively prime to $q$. Then
there exists an asymptotically good sequence of $q$-ary GQCCD codes
$(E_i)_{i\geq 1}$, where each GQC code in the sequence has block
lengths $(\underbrace{m_1,\ldots ,m_1}_{\ell_i},\ldots ,
\underbrace{m_a,\ldots , m_a}_{\ell_i})$ for some $\ell_i$.
\end{thm}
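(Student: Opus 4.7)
The plan is to mimic, for general $a$, the two-summand construction carried out in the paragraph immediately preceding the theorem. First I would invoke Theorem \ref{asymptotic-QCCD} once for each $j\in\{1,\ldots,a\}$ to obtain an asymptotically good sequence of $q$-ary QCCD codes $\{C^{(j)}_i\}_{i\geq 1}$, where $C^{(j)}_i$ has length $m_j\ell_i^{(j)}$, index $\ell_i^{(j)}$, and fixed co-index $m_j$. The indices $\ell_i^{(j)}$ obtained from Theorem \ref{asymptotic-QCCD} for different $j$ may \emph{a priori} disagree, so I would pass to a common subsequence of integers $\ell_i$ for which a QCCD code of co-index $m_j$ exists simultaneously for every $j=1,\ldots,a$ in its asymptotically good family; the limit rates $R_j>0$ and limit relative distances $\delta_j>0$ are preserved under this thinning.

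Next, with the synchronized index $\ell_i$, I would set
$$E_i := [\,C^{(1)}_i \,|\, C^{(2)}_i \,|\, \cdots \,|\, C^{(a)}_i\,].$$
Two properties need to be verified. On the one hand, the remark following Lemma \ref{LCD lemma} already states that the concatenation-of-LCD-codes construction extends from two factors to any number $a$ of factors, so $E_i$ is $q$-ary LCD with $\dim E_i = \sum_{j=1}^{a}\dim C^{(j)}_i$ and $d(E_i) = \min_j d(C^{(j)}_i)$. On the other hand, Lemma \ref{LCD lemma-2} also extends inductively from two QC components to $a$ QC components, because multiplication by $x$ acts independently on each QC block in the product ring $R_1^{\ell_i}\times\cdots\times R_a^{\ell_i}$; therefore $E_i$ is a GQC code with block lengths
$(\underbrace{m_1,\ldots,m_1}_{\ell_i},\ldots,\underbrace{m_a,\ldots,m_a}_{\ell_i})$. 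Combining the two facts, $E_i$ is a GQCCD code of the required block shape.

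Finally, I would check asymptotic goodness. Writing $M := m_1+\cdots+m_a$, the length of $E_i$ is $M\ell_i$, so
$$R_{E_i} = \frac{\sum_{j=1}^{a}\dim C^{(j)}_i}{M\ell_i}
\ \longrightarrow\ \sum_{j=1}^{a}\frac{m_j}{M}\,R_j > 0,$$
$$\delta_{E_i} = \frac{\min_{j}d(C^{(j)}_i)}{M\ell_i}
\ \longrightarrow\ \min_{1\leq j\leq a}\frac{m_j}{M}\,\delta_j > 0,$$
both limits being positive because each $R_j,\delta_j$ is positive. This exhibits $(E_i)_{i\geq 1}$ as an asymptotically good sequence of $q$-ary GQCCD codes with the prescribed block lengths.

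The main obstacle is the synchronization issue in the first step: one must ensure that the asymptotically good QCCD families of Theorem \ref{asymptotic-QCCD}, taken for the distinct co-indices $m_1,\ldots,m_a$, can be aligned along a single sequence of outer indices $\ell_i\to\infty$. After this alignment is secured, the rest of the argument is a routine inductive extension of Lemmas \ref{LCD lemma} and \ref{LCD lemma-2} together with the same rate/distance calculation already displayed for the two-block case.
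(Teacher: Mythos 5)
Your proposal follows essentially the same route as the paper: invoke Theorem \ref{asymptotic-QCCD} for each $m_j$, concatenate via $E_i=[C^{(1)}_i|\cdots|C^{(a)}_i]$, and apply the inductive extensions of Lemmas \ref{LCD lemma} and \ref{LCD lemma-2} together with the rate/distance computation already displayed for $a=2$. The one point you flag --- synchronizing the outer indices $\ell_i$ across the $a$ families --- is a legitimate subtlety that the paper's exposition silently assumes (it simply writes all sequences with a common $\ell_i$); your remedy of passing to a common subsequence is the right idea, though to make it airtight one would note that the QCCD construction behind \cite[Cor.~3.8]{GOS} produces codes for a set of indices $\ell$ dense enough (indeed, essentially all large $\ell$ coprime to $q$) that the intersection over $j=1,\ldots,a$ is still infinite, so the thinning does go through.
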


\section{Conclusion and open problems} \label{conclusion}

We have provided a concatenated structure for GQC codes in the sense
of \cite{GO} and \cite{J}, which gives rise to their trace
representation covering the trace representations of QC and cyclic
subclasses. Moreover, a multilevel concatenated view of GQC codes is
introduced, which leads to a minimum distance bound that extends
Jensen's bound for QC codes. By extending the CRT decomposition of
the base ring into self-reciprocal polynomials and reciprocal pairs
of polynomials, as done in \cite{LS} for QC case, we have obtained
criteria for GQC codes to be self-dual or, respectively, LCD. We
have then showed that long GQCCD codes are good. The enumeration of
GQC codes that are LCD should be studied, with the potential
application of deriving a Gilbert-Varshamov bound. Tabulating GQC
codes parameters in modest lengths is also a worthy goal.

\section{Acknowledgment}
\"{O}zbudak and \"{O}zkaya are supported by T\"{U}B\.{I}TAK project
215E200, which is associated with the SECODE project in the scope of
CHIST-ERA Program. Sol\'{e} is supported by the SECODE Project too.
G\"{u}neri is supported partly by T\"{U}B\.{I}TAK 215E200 (SECODE)
and by T\"{U}B\.{I}TAK 114F432 projects. Sa\c{c}\i kara is supported
by T\"{U}B\.{I}TAK project 114F432.

% -----------------------------------------------------------
%\bibliographystyle{amsplain}
%\bibliography{xbib}

\end{document}